\renewcommand{\cite}[1]{\citep{#1}}
\DeclareMathOperator{\argmax}{\arg\max}
\newtheorem{lemma}{Lemma} 
\newtheorem{theorem}{Theorem}
\newtheorem{corollary}{Corollary}
\newtheorem{claim}{Claim}
\newtheorem{fact}{Fact}
\theoremstyle{definition}
\newtheorem{definition}{Definition}[section]
\newtheorem{example_}{Example}
\DeclareRobustCommand{\shortto}{%
  \mathrel{\mathpalette\short@to\relax}%
}
\newcommand{\short@to}[2]{%
  \mkern2mu
  \clipbox{{.5\width} 0 0 0}{$\m@th#1\vphantom{+}{\shortrightarrow}$}%
  }
\newcommand{\indicatorconstant}{V}
\newcommand{\upar}{u_{frac}}
\newcommand{\uparp}[2]{\upar^{#1}(#2)}
\newcommand{\upresind}{u_{ind}}
\newcommand{\upresindp}[2]{\upresind^{#1}(#2)}
\newcommand{\upresplus}{u_{plus}}
\newcommand{\upresplusp}[2]{\upresplus^{#1}(#2)}
\newcommand{\upresmax}{u_{max}}
\newcommand{\upresmaxp}[2]{\upresmax^{#1}(#2)}
\title{Voter Priming Campaigns: Strategies, Equilibria, and Algorithms}
\author {
    Jonathan Shaki \textsuperscript{\rm 1},
    Yonatan Aumann \textsuperscript{\rm 1},
    Sarit Kraus \textsuperscript{\rm 1}
}
\begin{document}
\maketitle
\begin{abstract}

    Issue \emph{salience} is a major determinant in voters' decisions.  Candidates and political parties campaign to shift salience to their advantage - a process termed \emph{priming}. We study the dynamics, strategies and equilibria of campaign spending for voter priming in multi-issue multi-party settings. We consider both parliamentary elections, where parties aim to maximize their share of votes, and various settings for presidential elections, where the winner takes all. For parliamentary elections, we show that pure equilibrium spending always exists and can be computed in time linear in the number of voters. For two parties and all settings, a spending equilibrium exists such that each party invests only in a single issue, and an equilibrium can be computed in time that is polynomial in the number of issues and linear in the number of voters. We also show that in most presidential settings no equilibrium exists. Additional properties of optimal campaign strategies are also studied.
\end{abstract}

\section{Introduction}
Political parties and candidates invest substantial resources in campaigns aimed at winning over voters. Interestingly, research reveals that these campaigns often do not directly alter voters' views or attitudes toward candidates and issues. Rather, campaigns tend to influence the relative \emph{salience} of various topics, thereby shaping the importance voters assign to these issues~\cite{IssueSalienceandPoliticalDecisions,macdonald2023core,Druckman2004}. 
Consequently, an essential element in political campaigns involves strategically guiding the salience of issues to the candidates' advantage, a process known in political science as \emph{priming}~\cite{Bartels2006}. 
Political campaigns are costly, and candidates have limited budgets.  Thus, from a strategic perspective, the question is how to best invest the available campaign budget in order to obtain the most beneficial priming for a given candidate, and whether an equilibrium investment profile exists. From a computational perspective, the question is how to compute the optimal and equilibrium investments, if they exist.  These are the topics of this paper. 

The importance of issue salience in voting has recently gained attention in the context of election control and manipulation~\cite{Lu2019,estornell2020election}. In these works, priming is viewed in the context of election manipulation and hence assumed to be limited to a single malicious actor.  We view priming in the context of campaigning and study equilibria thereof. 

\subsection{Summary of Results}
We consider two core settings, which we term \emph{parliamentary} and \emph{presidential}.  In the parliamentary setting, parties aim to maximize their share of votes.  In the presidential case, the primary goal is to be ranked first.  For the presidential setting, we consider several variants, which relate to a possible secondary goal of increasing the share of votes.\footnote{without such a secondary goal, unnatural behavior can emerge wherein candidates invest against their own good.} The different variants, formally defined in detail in Section \ref{sec:model}, are: (i) $\upresind$: the sole goal is to be ranked first, with no secondary goal, (ii) $\upresplus$: a weighted average the goal of being ranked first and increasing the share of votes, with the weight highly skewed towards the former, (iii) 
$\upresmax$: the primary goal is to be ranked first, only if this cannot be achieved then increasing the share is considered as a goal, with significantly lesser utility.  Throughout, we consider plurality ballot form, wherein each voter casts one vote for its most favorable candidate/party \cite{meir2010convergence}. 

For each setting and variant, we consider the question of whether a Nash equilibrium necessarily exists and whether, knowing the strategies of the other players, a best-response necessarily exists. Our main results are summarised in Table \ref{tab:summary}. For the parliamentary setting, both a Nash equilibrium (in pure strategies) and best-response always exist.  When there are only two candidates, it is further the case that in equilibrium each candidate invests its entire budget in priming of a single issue.  For the presidential setting, for both variants that include a secondary goal, a Nash equilibrium need not exist in the general case, and not even a best-response.
For the variant where the sole goal is winning, with no other considerations, a best response does exists, and we do not know if a Nash equilibrium necessarily exists.  We note, however, that the Nash equilibria in this variant may be unnatural, in the sense that they require candidates to invest not for their own good, but rather for the benefit of others. In all presidential variants, if there are only two candidates, an equilibrium exists wherein each candidate invests its entire budget in priming of a single issue. 

Whenever a Nash equilibrium is guaranteed to exist, we provide an algorithm that is linear in the number of voters to compute it. For the two candidates case, the algorithms are also polynomial in the number of issues and candidates. For the general parliamentarian case, the algorithm may be exponential in the number of issues and candidates. 

Due to limited space, the full proofs appear in appendix. In the main body we mostly provide limited proofs and outlines.

\begin{table*}[t]
\centering
\renewcommand{\arraystretch}{3}

\begin{tabular}{@{\extracolsep{6pt}}lc|ccc}

\toprule
    \makecell{setting}
    & \makecell{variant}
    & \makecell{best response}
    & \makecell{Nash equilibrium}
    & \makecell{Nash equilibrium \\ 2 candidates}
    \\
\midrule
\makecell{parliamentary} 
& \makecell{$\upar$} 
& \makecell{\checkmark exists \\ (\Cref{frac:best_response_exists})} 
& \makecell{\checkmark exists\\ (in pure strategies) \\ (\Cref{frac:pure_equilibrium_exists})} 
& \makecell{\checkmark exists  (with pure strategies, \\ single issue investments) \\ (\Cref{frac:2_candidates_focus_pure_equilbirium})} 
\\

\hline

\multirow{3}{*}{presidential} 

&\makecell{$\upresplus$}  
&  \makecell{\ding{55} need not \\ exist \\ (\Cref{plus:no_best_response})} 
& \makecell{\ding{55} need not \\ exist \\ (\Cref{plus:no_nash_equilibrium})} 
& \makecell{\checkmark exists (with pure strategies, \\ single issue investments) \\ (\Cref{plus:2_candidates_frac_equilbirium})} 
\\

\cline{2-5}

&\makecell{$\upresmax$} 
& \makecell{\ding{55} need not \\ exist \\ (\Cref{max:best_response_doesnt_exist})} 
& \makecell{\ding{55} need not \\ exist \\ (\Cref{max:nash_equilibrium_doesnt_exist})} 
& \makecell{\checkmark exists (with pure strategies, \\single issue investments) \\ (\Cref{max:two_candidates})} 
\\

\cline{2-5}

& \makecell{$\upresind$} 
& \makecell{\checkmark exists \\ (\Cref{ind:best_response_exists})} 
& $\bm{?}$ open 
& \makecell{\checkmark exists (with pure strategies,\\ single issue investments) \\ (\Cref{ind:2_candidates_dominant_strategies})} 
\\

\bottomrule
\end{tabular}
\caption{{\bf{Summary of results}.} Whenever Nash equilibrium is guaranteed to exist, it can be computed in time linear in the number of voters.  For two candidates, the time is also polynomial in the the number of issues. 
}
\label{tab:summary}
\end{table*}

\subsection{Related Work}
Campaign management and bribery have been researched extensively in the literature (see~\cite{schlotter2011campaign,faliszewski2015complexity,knop2020voting,zagoury2021targeted}. And for extensive review see~\cite{simpser2013governments, islam2010methods}).
Both these terms refer to a setting where individual voters are influenced to alter their vote, and the main question is how to choose these voters. We consider situations where changing voters' decision-making is only possible by general campaign messages, which alter the importance of different issues.

Ad delivery algorithms also aim to change voters' actions. 
They promise to lower the cost of advertising and increase the efficiency of campaigns
through detailed targeting, where advertisers can specify the
users they would like to reach using attributes \cite{danaher2023optimal, kreiss2011yes}. However, it was shown (e.g., \cite{ali2021ad,cotter2021reach}) that
targeting may not work as intended, and may shape the political ad delivery in ways that may not be beneficial to the political campaigns and to societal discourse. 

Campaign communications that direct voter attention to the considerations that
campaigns emphasize are termed ‘priming’. This phenomenon has been studied repeatedly over the years in many contexts (from psychology ~\cite{kuzyakov2000review, molden2014understanding} to natural language processing ~\cite{shaki2023cognitive, zhou2024context}). Regarding elections, Matthews \cite{matthews2019issue} showed clear evidence of issue priming throughout six US national elections. 
Macdonald \cite{macdonald2023core} showed that citizens’ core values are less affected by campaign-related priming effects, supporting our approach that the effect of investing in priming depends on the issue.

In \cite{hillygus2008persuadable}[Chapter. 1] the importance of wedge issues is discussed, showing the possibility to attract attention to topics that are less important and even those that are not the strongest of the candidates. 

There are several works that model the utility of voters, which take issues' importance into consideration.
For example, in \cite{kollman2018computational}'s
model, termed spatial elections, each voter attaches both a weight and an aspired value
to each of the issues, and every candidate has a position on every issue. Each voter votes for the candidate with minimal weighted Euclidean distance between its positions and the voter's ideal positions. They consider the problem of finding the optimal positions for candidates in order to maximize their votes.

\cite{Lu2019} consider issue salience priming/manipulation in the spatial preference framework.  assuming a single primer/manipulator. They consider the case of binary salience, and a manipulator that can determine which issues the voters care about and which not. Voters vote deterministically to the candidate who is closest to them (on the salient issues). \cite{Lu2019} view priming (which they term \emph{control by manipulation}) as a type of attack by a malicious entity, and hence only consider priming by a single entity.  The paper derives strongly negative results, and concludes that computing effective manipulation is hard even for two candidates, or for a single voter.

\cite{estornell2020election} extend \cite{Lu2019} and consider issue priming in the setting of weighted spatial preferences (rather than binary), and a manipulator that is constrained by a budget (for shifting the weights).  They consider both the deterministic setting - where the voter necessarily votes for the closest candidate - 
and stochastic settings  - where the probability of voting for a candidate is monotone in the weighted distance, 
and both the parliamentary and the presidential elections  (which they term \emph{MaxSupport} and \emph{Majority Vote}).   
Following \cite{Lu2019}, \cite{estornell2020election} consider only a single manipulator. They show that in the most general setting, the control problem is NP-hard, but that the stochastic setting - which is the more similar to ours - is tractable if the probability of voting for a candidate is linear in their weighted distance from the voter (there are also versions of the deterministic setting that are tractable). 

Our work differs from \cite{estornell2020election} and \cite{Lu2019} in several ways. Our voter preference model strictly subsumes the spatial model.  On the other hand, we only consider the stochastic case and assume that the probabilities are linear in the relative weighted distances (which corresponds to the stochastic setting for which the problem is tractable).  More importantly, we consider the \emph{strategic setting} where priming is performed by multiple candidates in the race and seek an equilibrium in such a setting - a topic not considered in \cite{estornell2020election} and \cite{Lu2019} at all.

\cite{denter2020campaign} considers the problem of campaign investment management in a setting where campaigns have two simultaneous effects: (i) Persuasion: increasing the quality of the policy in the issue as perceived by the voters through policy
advertising and (ii) Priming: making the issue more salient, thereby increasing the issue's perceived importance (as in our model). The paper considers only the case of two candidates and two issues, and does not provide any algorithms. We note that the exact model of how campaigns alter the salience is somewhat different between our model and that of \cite{denter2020campaign}, which leads to different results concerning the existence of an equilibrium.

\section{Model}
\label{sec:model}
We consider the following stylized model. There is a set $V$ of \emph{voters}, a set $C$ of \emph{candidates/parties}, and a set $I=\{ 1, \ldots , n\}$ of \emph{issues}. For each candidate $c$, issue $i$, and voter $v$, there is a \emph{quality score} $q^v_i(c)$, reflecting voter $v$'s perception of candidate $c$'s competence in handling issue $i$.  We assume that all quality scores are non-negative and $\sum_{c\in C}q^v_i(c) \leq 1$.  So, the scores reflect \emph{relative competence}, and can be viewed as the probability of $v$ to vote $c$ if issue $i$ were the only issue at stake; and $1-\sum_{c\in C}q^v_i(c)$ is the probability that $v$ does not vote. 

The salience of issue $i$ for voter $v$ represents how important this issue is for $v$. Before the campaign, this salience is $s_i^v(0)$. Since we only care about the relative salience (which is what governs the voting decision), we assume that $\sum_{i\in I}s_i^v(0)=1$ for every $v \in V$, and that $s^v_i(0) \ge 0$ for every $v \in V, i \in I$.

Each candidate $c$ has a total campaign budget $W^c \ge 0$, which it can distribute among the issues in order to increase their salience, with $w^c_i \ge 0$ denoting the amount invested in promoting issue $i$.  So, $\sum_{i\in I}w^c_i\leq W^c$ (where inequality occurs when some of the budget is not used). The total budget of all candidates is $W^* = \sum_{c' \in C} W^{c'}$. The total investment, over all candidates in issue $i$ is denoted $w_i$.  The vectors $\bm{w}^c=(w_1^c,\ldots,w_n^c), \bm{w}=\sum_{c \in C} \bm{w}^c$ are $c$'s investment profile and the total investment profile, respectively. In addition, $\bm{w}^{-c}=\sum_{c' \ne c} \bm{w}^{c'}$ is the total investment vector of all candidates except $c$.

Given a total investment of $w_i$ in issue $i$, the salience of this issue for voter $v$ is denoted $s_i^v(w_i)$. Investment in an issue linearly increases its salience: 
\begin{align}
    s_i^v(w_i)=\rho_iw_i+s_i^v(0) 
\end{align}
Here, $\rho_i \ge 0$ reflects the easiness, or difficulty, of increasing the salience of the issue; a large $\rho_i$ means that it is relatively easy to increase the salience, while a small $\rho_i$ - the opposite. The rationale is that some issues, say those that seem totally unimportant to the electorate, may require more investment to increase their salience than others.

Given investment $\bm{w}$, quality scores $q^v_i(c)$, and the salience scores $s^v_i(w_i)$, the relative salience is:

\begin{align*}
    s_i^v(\bm{w}) = \frac{s_i^v(w_i)}{\sum_{j\in I}s_j^v(w_j)}
\end{align*}

The probability that $v$ casts its vote for candidate $c$ is the weighted sum of the candidates' quality scores over the different issues, weighted by their relative salience:
\begin{align*}
    p^v(c,\bm{w}) = \sum_{i\in I} q^v_i(c) \cdot s_i^v(\bm{w}) = \bm{q^v}(c) \cdot \bm{s^v}(\bm{w})
\end{align*}
The following claim provides that the $p^v(c,\bm{w})$'s indeed correspond to the necessary probability structure:
\begin{claim}
    For any possible $\bm{w}$, $p^v(c,\bm{w})\geq 0$, for all $v,c$. In addition, $\sum_{c \in C} p^v(c,\bm{w}) \le 1$ for every $v$.
\end{claim}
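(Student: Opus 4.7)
The plan is to verify both inequalities directly from the definitions, leaning on the normalization built into the relative salience and the bound on quality scores.

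For the first claim, I would observe that $p^v(c,\bm{w})$ is defined as a sum of products $q^v_i(c)\cdot s_i^v(\bm{w})$. The factor $q^v_i(c)$ is non-negative by assumption. For the second factor, note that $s_i^v(w_i) = \rho_i w_i + s_i^v(0)$ is non-negative since $\rho_i \geq 0$, $w_i \geq 0$, and $s_i^v(0) \geq 0$; moreover, the denominator $\sum_{j\in I} s_j^v(w_j)$ is strictly positive because $\sum_{j\in I} s_j^v(0) = 1$ and all the $\rho_j w_j$ terms only add to this. Hence $s_i^v(\bm{w}) \geq 0$, and $p^v(c,\bm{w})$ is a sum of non-negative terms, giving non-negativity.

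For the second claim, the key idea is to swap the order of summation and then exploit the two normalization facts. Concretely, I would write
\begin{align*}
\sum_{c\in C} p^v(c,\bm{w})
&= \sum_{c\in C}\sum_{i\in I} q^v_i(c)\cdot s_i^v(\bm{w}) \\
&= \sum_{i\in I} s_i^v(\bm{w}) \sum_{c\in C} q^v_i(c) \\
&\leq \sum_{i\in I} s_i^v(\bm{w}),
\end{align*}
using the assumption $\sum_{c\in C} q^v_i(c) \leq 1$. Then I would show that $\sum_{i\in I} s_i^v(\bm{w}) = 1$ directly from the definition of the relative salience: the numerators sum to exactly the common denominator $\sum_{j\in I} s_j^v(w_j)$, so the ratio is $1$. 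Combining these two gives $\sum_{c\in C} p^v(c,\bm{w}) \leq 1$.

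There is no real obstacle here; the only subtlety is ensuring that the denominator in the definition of $s_i^v(\bm{w})$ is nonzero, which I would address explicitly by citing $\sum_{j\in I} s_j^v(0) = 1$ together with $\rho_j w_j \geq 0$. The rest is an exchange-of-summation calculation.
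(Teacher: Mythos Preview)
Your proposal is correct and is exactly the natural direct verification from the definitions. The paper itself states this Claim without proof, treating it as immediate from the model assumptions; your argument spells out precisely the routine check the authors left implicit.
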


Note that the probabilities need not sum to 1.  The remaining probability is the probability that the voter chooses not to vote at all.  This could happen, for example, if the voter deems all candidates incompetent in the salient issues.  If we wish to avoid such behavior we can further require that $\sum_{c\in C}q_i^v(c)=1$ for all $i$ and $v$. 
Hence the total expected number of votes candidate $c$ gets is
\begin{align*}
    p(c, \bm{w}) = \sum_{v\in V} p^v(c,\bm{w})
\end{align*}
We consider two settings: parliamentary elections, where each candidate (/party) aims to maximize its fraction of votes, and presidential elections, where candidates seek to win the elections. It is assumed throughout that the number of voters is sufficiently large so that the actual number of votes any candidate gets is essentially the expectation. 

The expected fraction of voters voting for candidate $c$ is thus
\begin{align*}
    r(c, \bm{w}) = \frac{p(c, \bm{w})}{\sum_{c'\in C} p(c', \bm{w})}
\end{align*}
The victory indicator is:
\begin{align*}
    v(c, \bm{w}) = \begin{cases}
        \frac{1}{|\underset{c' \in C}{\argmax} \ p(c', \bm{w})|} & c \in \underset{c' \in C}{\argmax} \ p(c', \bm{w}) \\
        0 & \text{otherwise}
    \end{cases}
\end{align*}
\paragraph{Parliamentary Elections}
Here, the utility of $c$ is simply
\begin{align}
    \uparp{c}{\bm{w}} = r(c, \bm{w})
\label{eq:u_par}
\end{align}



\paragraph{Presidential Elections} We consider several versions of presidential setting. The simplest is just taking the victory indicator:
\begin{align*}
    \upresindp{c}{\bm{w}} = v(c, \bm{w})
\end{align*}
This means, however, that candidates with no chance to win are agnostic to the number of votes they get.
This view overlooks the reality that even candidates with little chance of winning care deeply about their vote count. A strong voter base provides lasting political capital, influencing both the candidate's future prospects and their party's direction \cite{anagol2016runner}. 
In order to mitigate this, we propose two alternative utility functions. 
The first is just the sum of the indicator multiplied by a constant with the relative fraction of votes:
\begin{align*}
    \upresplusp{c}{\bm{w}} = V \cdot v(c, \bm{w}) + r(c, \bm{w})
\end{align*}
Where $\indicatorconstant \in R$ determines the importance of the victory indicator, versus that of the votes' fraction. With this utility function both the losing and the winning candidates care to maximize their votes' fraction.

And another one is the maximum between  the indicator multiplied by a constant and the number of relative votes:
\begin{align*}
    \upresmaxp{c}{\bm{w}} = \max(\indicatorconstant \cdot v(c, \bm{w}), r(c, \bm{w}))
\end{align*}
With this utility, only losing candidates care to maximize their fraction of votes, while a candidate that wins alone is agnostic.

It is assumed that $\indicatorconstant \ge |C|$, so that even a tie with with all candidates yields higher utility than the highest votes' fraction without winning.
\paragraph{The Priming Game.}
The different utility functions define a game, with the players being the candidates/parties, and the pure strategies being the investment profiles $\bm{w}^c$. We say that an investment $\bm{w}^c$ is \emph{split} if at least two issues get a positive investments:  $\exists i\neq j$ with $w_i^c \geq w_j^c>0$.  The investment is \emph{focused} if it is not split. We denote by $S_c$ the investment strategy of $c$ and $\bm{S}={(S_c)}_{c\in C}$.  Note that the $S_c$'s may be mixed. The following is trivial.  
\begin{fact}
    The games associated with $\upar, \upresind, \upresplus$ are constant sum games, but not that of $\upresmax$.
\end{fact}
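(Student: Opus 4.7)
The plan is straightforward: verify directly that the three utilities $\upar, \upresind, \upresplus$ sum, over all candidates, to a quantity that does not depend on the investment profile $\bm{w}$, and then exhibit a counterexample showing that $\sum_c \upresmaxp{c}{\bm{w}}$ does depend on $\bm{w}$.

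First I would handle $\upar$. By definition $\uparp{c}{\bm{w}}=r(c,\bm{w})=p(c,\bm{w})/\sum_{c'\in C}p(c',\bm{w})$, so $\sum_{c\in C}\uparp{c}{\bm{w}}=1$ identically. For $\upresind$, note that the $\argmax$ set is nonempty for every $\bm{w}$, and each winning candidate receives $1/|\argmax|$ while losers receive $0$; summing gives $\sum_{c\in C}\upresindp{c}{\bm{w}}=1$. For $\upresplus$, combining the two observations above yields $\sum_{c\in C}\upresplusp{c}{\bm{w}}=\indicatorconstant+1$. In each case the sum is a constant independent of $\bm{w}$, so the game is constant-sum.

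For $\upresmax$ I would construct a small counterexample with two candidates $c_1,c_2$ (so $\indicatorconstant\ge 2$) and two investment profiles $\bm{w}$ and $\bm{w}'$ in which $c_1$ wins in both but with different vote fractions, say $r(c_1,\bm{w})=0.9$ versus $r(c_1,\bm{w}')=0.6$. Because $\indicatorconstant\ge 2>1\ge r(c,\cdot)$, the winner's utility equals $\indicatorconstant$ in both cases, while the loser's utility equals $r(c_2,\cdot)$ (which is $0.1$ vs.\ $0.4$). Hence $\sum_c \upresmaxp{c}{\bm{w}}=\indicatorconstant+0.1\neq \indicatorconstant+0.4=\sum_c \upresmaxp{c}{\bm{w}'}$, so the total utility is not invariant and the game is not constant-sum. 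The only mildly delicate step is exhibiting concrete quality scores, salience priors, and budgets realising the two target vote fractions, but since $r(c_1,\bm{w})$ ranges continuously (and over the full interval between easily achievable extremes) as $\bm{w}$ varies, this is immediate; no further machinery is required.
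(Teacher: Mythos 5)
Your proof is correct and is exactly the direct verification the paper has in mind (the paper states this fact as trivial and gives no proof): the sums $\sum_c r = 1$, $\sum_c v = 1$, hence $\sum_c \upresplus^{c} = \indicatorconstant + 1$, and for $\upresmax$ the winner's utility saturates at $\indicatorconstant$ while the loser's still tracks $r$, so the total varies with $\bm{w}$. The realizability of your counterexample is indeed immediate from the simplified form $r(c,\bm{w}) = \bm{Q^c}\cdot\bm{w}/\bm{Q^*}\cdot\bm{w}$, so no gap remains.
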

\subsection{Simplifications}
\label{model:simplification}
The expression for $r, v$, and accordingly also for all the different utility functions, involves separate terms for each voter, each of which is itself rather complex.  With potentially millions of voters, this may seem like a problem, especially if further calculations need to be performed on these expressions (e.g. computing the equilibria). Fortunately, the following theorems provide that this complexity can be greatly simplified, as follows.

\begin{restatable}{lemma}{modelInvestingInZero}
    \label{model:investing_in_zero}
    When introducing an issue $0$ for which all candidates are ranked $0$ by all voters (i.e. $q^v_0(c) = 0$ for all $v \in V, c \in C$), the utilities of all candidates are identical whether they do not invest part of their budget or whether they invest it in issue $0$.
\end{restatable}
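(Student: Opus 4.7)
The plan is to fix any candidate $c$ with unspent budget, let $\bm{w}$ be the investment profile in which this slack is left unspent and $\bm{w}'$ the profile obtained by diverting the slack into the dummy issue $0$, and then to establish, voter-by-voter, that $p^v(c, \bm{w}) = p^v(c, \bm{w}')$. Since $p, r, v$, and therefore each of $\upar, \upresind, \upresplus, \upresmax$, are built from the $p^v$'s, all utilities must then coincide on the two profiles.

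I would break the voter-level equality into two cancellations. First, the numerator $\sum_i q^v_i(c)\, s^v_i(\bm{w})$ contains no issue-$0$ contribution in either profile, because the hypothesis $q^v_0(c) = 0$ for all $v, c$ forces the $i=0$ term to vanish identically. Second, the relative-salience denominator $\sum_j s^v_j(w_j)$ must be shown to agree between $\bm{w}$ and $\bm{w}'$. Here I would appeal to the natural convention that the new issue $0$ is inert---that is, $\rho_0 = 0$ and $s^v_0(0) = 0$---so that $s^v_0(w_0) = 0$ regardless of the amount diverted into it, and the denominator collapses to $\sum_{j \in I} s^v_j(w_j)$, which is identical in both profiles since they agree coordinate-wise on $I$. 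Together these two observations give $s^v_i(\bm{w}) = s^v_i(\bm{w}')$ for every $i \in I$, and hence $p^v(c, \bm{w}) = p^v(c, \bm{w}')$ for every $v$ and $c$.

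From there the rest is mechanical: summing over $v$ yields $p(c, \bm{w}) = p(c, \bm{w}')$; taking ratios yields $r(c, \bm{w}) = r(c, \bm{w}')$; and the set $\argmax_{c'} p(c', \cdot)$ defining $v(c, \cdot)$ is likewise unchanged. Each of $\upar, \upresind, \upresplus, \upresmax$ is a fixed algebraic combination of $r$ and $v$, so the equality propagates to all four utility functions, which is exactly what is claimed.

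The main obstacle is really conceptual rather than computational: one must pin down the right convention for the parameters $\rho_0$ and $s^v_0(0)$ of the introduced issue so that investment in it is genuinely salience-preserving. Once that convention is spelled out, the cancellation of the numerator uses only $q^v_0(c) = 0$ and the cancellation of the denominator uses only $s^v_0(\cdot) \equiv 0$; the rest of the argument is a one-line algebraic identity applied uniformly across voters, candidates, and utility variants.
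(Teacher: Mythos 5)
Your argument is sound under the convention you impose, but that convention ($\rho_0=0$, so that $s^v_0(w_0)\equiv 0$) is carrying all the weight, and it is not part of the lemma's hypothesis: the issue $0$ is characterized only by $q^v_0(c)=0$, with $\rho_0\ge 0$ arbitrary. Your key intermediate claim --- that $p^v(c,\bm{w})=p^v(c,\bm{w}')$ voter by voter --- is actually false once $\rho_0>0$: diverting budget into issue $0$ then inflates every voter's salience normalizer $\sum_j s^v_j(w_j)=1+\bm{\rho}\cdot\bm{w}$ while contributing nothing to any numerator, so every candidate's absolute vote probability strictly drops (more voters effectively abstain). The correct and more general mechanism, which is what the paper's proof uses, is that this normalizer is the \emph{same} for all candidates, so it cancels both in the ratio $r(c,\bm{w})=p(c,\bm{w})/\sum_{c'}p(c',\bm{w})$ and in $\argmax_{c'}p(c',\bm{w})$; after the cancellation one is left with numerators $\overline{B}^c+\bm{\overline{Q}^c}\cdot\bm{w}$ in which the issue-$0$ coordinate has coefficient $\overline{Q}^c_0=\sum_v q^v_0(c)\rho_0=0$, using only $q^v_0(c)=0$. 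Hence $r$ and $v$, and therefore all four utility variants, are independent of $w_0$ even though the $p^v$'s are not.

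In short: what you prove is the special case in which investment in issue $0$ is literally inert, which makes the statement nearly vacuous and does not cover the issue-$0$ object the paper actually works with. To repair your proof without the paper's simplification lemma, replace the per-voter equality of $p^v$ by the observation that for each voter $v$ the map $\bm{w}\mapsto p^v(c,\bm{w})$ changes only through the common denominator $1+\bm{\rho}\cdot\bm{w}$ when $w_0$ varies, and that this common positive factor leaves $r$ and the winner set unchanged. The downstream use of the lemma (assuming full budget expenditure so that $\sum_i w_i = W^*$ in Theorem~\ref{model:final_simplification}) happens to be compatible with your convention, but the lemma as stated requires the stronger argument.
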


The above lemma assures us that we can assume all players always invest all their budget, some of it potentially in issue $0$. From now on we will always assume $0$ is included in $I$ and all candidates invest their entire budget.

Now, we can simplify the problem even further, by aggregating the voters' rating:

\begin{restatable}{theorem}{modelFinalSimplification}

It holds that,

\label{model:final_simplification}
\begin{align*}
    & r(c, \bm{w}) = \frac{\bm{Q^c} \cdot \bm{w}}{\bm{Q^*} \cdot \bm{w}}
    & \\
    & v(c, \bm{w}) = \begin{cases}
        \frac{1}{|\underset{c' \in C}{\argmax} \ \bm{Q^{c'}} \cdot \bm{w} |} & c \in \underset{c' \in C}{\argmax} \ \bm{Q^{c'}} \cdot \bm{w} \\
        0 & \text{otherwise}
    \end{cases}
\end{align*}
For

\begin{align*}
    & Q^c_i = \frac{\sum_{v \in V} \sum_{i\in I} q^v_i(c) \cdot s_i^v(0)}{W^*} + \sum_{v \in V} q^v_i(c) \cdot \rho_i
    & \\
    & Q^*_i = \sum_{c \in C} Q^c_i
\end{align*}

\end{restatable}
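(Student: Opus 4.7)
The plan is to reduce both $r(c,\bm{w})$ and $v(c,\bm{w})$ to expressions with a common, voter-independent denominator, and then recognize the remaining numerator as the inner product $\bm{Q^c}\cdot\bm{w}$. By \Cref{model:investing_in_zero}, I may assume every candidate spends its entire budget, so $\sum_{i\in I} w_i = W^*$ for every profile $\bm{w}$ under consideration. The pivotal observation is that the salience normalizer is voter-independent: since $\sum_{j\in I} s_j^v(0) = 1$ for every $v$,
\[
D(\bm{w}) \;:=\; \sum_{j\in I} s_j^v(w_j) \;=\; \sum_{j\in I} \rho_j w_j + 1,
\]
which does not depend on $v$ and is strictly positive. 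Consequently $s_i^v(\bm{w}) = s_i^v(w_i)/D(\bm{w})$, and the factor $1/D(\bm{w})$ pulls out of every voter sum when computing $p(c,\bm{w}) = \sum_v \sum_i q_i^v(c)\,s_i^v(\bm{w})$.

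The heart of the proof is to show $p(c,\bm{w}) = (\bm{Q^c}\cdot\bm{w})/D(\bm{w})$. After factoring out $1/D(\bm{w})$, the numerator becomes
\[
\sum_{v}\sum_{i} q_i^v(c)\bigl(\rho_i w_i + s_i^v(0)\bigr) \;=\; \sum_{i} \rho_i w_i \sum_v q_i^v(c) \;+\; \sum_v \sum_j q_j^v(c)\, s_j^v(0).
\]
The first sum is already linear in $\bm{w}$ and matches the $\rho_i$-term of $Q^c_i$ weighted by $w_i$. The second sum is constant in $\bm{w}$; the trick is to rewrite it as $\bigl(\sum_i w_i\bigr)\cdot\tfrac{1}{W^*}\sum_v \sum_j q_j^v(c)\,s_j^v(0)$, exploiting $\sum_i w_i = W^*$, and then distribute this constant coefficient over the $w_i$'s. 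The resulting coefficient of each $w_i$ is precisely $Q^c_i$ as defined in the statement, so the numerator collapses to $\bm{Q^c}\cdot\bm{w}$.

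The remaining steps are essentially bookkeeping. Substituting into $r(c,\bm{w}) = p(c,\bm{w})/\sum_{c'} p(c',\bm{w})$ cancels the common factor $D(\bm{w})$ and yields $(\bm{Q^c}\cdot\bm{w})/(\bm{Q^*}\cdot\bm{w})$ with $\bm{Q^*} = \sum_{c'} \bm{Q^{c'}}$. For $v(c,\bm{w})$, since the positive factor $1/D(\bm{w})$ is shared by all candidates, $\argmax_{c'} p(c',\bm{w}) = \argmax_{c'} \bm{Q^{c'}}\cdot\bm{w}$, so the piecewise formula transfers directly. The main obstacle, and really the only non-routine ingredient, is the redistribution trick that converts the voter-level constant term into an inner product against $\bm{w}$; this is exactly where \Cref{model:investing_in_zero} becomes indispensable, since without the identity $\sum_i w_i = W^*$ the constant term could not be absorbed into a linear functional of $\bm{w}$ with the specific coefficients $Q^c_i$.
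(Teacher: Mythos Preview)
Your argument is correct and follows essentially the same route as the paper: the paper first proves an intermediate lemma giving $r(c,\bm{w}) = (\overline{B}^c + \bm{\overline{Q}^c}\cdot\bm{w})/(\overline{B}^* + \bm{\overline{Q}^*}\cdot\bm{w})$ via the voter-independent denominator $1+\bm{\rho}\cdot\bm{w}$, and then applies exactly your redistribution trick, rewriting the constant $\overline{B}^c$ as $(\overline{B}^c/W^*,\dots,\overline{B}^c/W^*)\cdot\bm{w}$ using $\sum_i w_i = W^*$ from \Cref{model:investing_in_zero}. You have simply merged the two steps into one pass, but the ideas, including the handling of $v(c,\bm{w})$ via the shared positive factor, are identical.
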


Since all the different utility functions depend only on $r(c, \bm{w}), v(c, \bm{w})$, we don't need to actually consider individual voters after calculating $\bm{Q^c} : \forall c \in C$, and the representation size is independent of the number of voters.

And we say that,

\begin{definition}
    The \emph{rank} of an issue $i$ for candidate $c$ is $Q^c_i$.
\end{definition}

\section{Parliamentary Elections} \label{sec:parliamentary}
\subsection{Best Response Strategies}
We start with studying the players' best response strategies. 

\begin{restatable}{theorem}{fracBestResponseExists}
    \label{frac:best_response_exists}
In the parliamentary setting, a best response (against any finite support strategy profile) always exists.
\end{restatable}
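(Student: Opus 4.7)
The plan is to reduce the statement to a routine application of the Weierstrass extreme value theorem, leveraging the algebraic simplification provided by Theorem \ref{model:final_simplification}. Fix a candidate $c$ and a finite-support strategy profile for the remaining candidates, say pure profiles $\bm{w}^{-c}_1, \ldots, \bm{w}^{-c}_K$ played with probabilities $\pi_1, \ldots, \pi_K$. By Lemma \ref{model:investing_in_zero}, we may assume $c$ spends its entire budget, so its pure strategy space is the compact simplex $\Delta^c = \{\bm{w}^c \in \mathbb{R}^{|I|}_{\geq 0} : \sum_{i \in I} w^c_i = W^c\}$.

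Using Theorem \ref{model:final_simplification}, the expected utility of $c$ against this mixture is
\begin{align*}
    F(\bm{w}^c) \;=\; \sum_{k=1}^{K} \pi_k \cdot \frac{\bm{Q^c} \cdot (\bm{w}^c + \bm{w}^{-c}_k)}{\bm{Q^*} \cdot (\bm{w}^c + \bm{w}^{-c}_k)}.
\end{align*}
Each summand is a ratio of affine functions in $\bm{w}^c$, hence continuous wherever the denominator is nonzero. The next step is to verify that on $\Delta^c$ each denominator is bounded below by a positive constant. Expanding $Q^*_i$ and using that both $\sum_i w^c_i$ and $\sum_i (\bm{w}^{-c}_k)_i$ are fixed, the denominator equals a strictly positive baseline term arising from the initial salience (namely $\sum_{c'}\sum_v \sum_j q^v_j(c')\,s^v_j(0)$) plus a nonnegative quantity in $\bm{w}^c$, so it is uniformly bounded away from zero. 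In the degenerate subcase where this baseline vanishes, every voter abstains regardless of investment, making $F$ identically zero and every strategy trivially a best response.

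Consequently $F$ is a continuous function on the nonempty compact set $\Delta^c$, and the Weierstrass extreme value theorem yields the existence of a maximizer, which is a best response in pure strategies (and hence also against all mixed responses of $c$ itself). The main obstacle, modest as it is, is the verification that the denominators do not degenerate over $\Delta^c$; everything else is a direct continuity-plus-compactness argument, made possible only because the simplification in Theorem \ref{model:final_simplification} reduced the per-voter nonlinear expression to a single ratio of linear functionals of $\bm{w}$.
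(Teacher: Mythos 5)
Your proof takes essentially the same route as the paper's, which is a one-line appeal to continuity of the utility on the closed (and bounded) investment set; you have simply spelled out the Weierstrass argument and the reduction via Theorem~\ref{model:final_simplification}. The only small slip is in your degenerate case: if the baseline $\overline{B}^*$ vanishes, the denominator reduces to $\sum_i \overline{Q}^*_i w_i$, which can still be positive for some investments, so $F$ need not be identically zero there --- but the paper implicitly assumes this non-degeneracy as well, so this does not change the verdict.
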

\ifbool{EXPLAIN}{The proof follows from the continuity of $u_{frac}$.}{}

\paragraph{Split investments.}  While players can split their investment between issues, the following proposition establishes that doing so is never a better response to pure strategies of the other candidates than focusing the entire investment on a single issue.
The following theorem is key to the existence and computation of a pure Nash equilibrium in the parliamentary setting.
\begin{theorem}
    \label{frac:focus_optimal_iff_split_is}
    For any candidate $c$, pure investment $w^{-c}$ of the other candidates, and response $w^c$ with strictly positive investment only in a set $J$ of issues:
    $w^c$ is a best response (for $c$) if and only if for any issue $j\in J$ a focused investment in $j$ (alone) is a best response. 
\end{theorem}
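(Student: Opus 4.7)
The plan rests on observing that, once the other candidates' investments are fixed, $u_{frac}$ reduces to a Möbius (fractional-linear) function of $\bm{w^c}$. By \Cref{model:final_simplification}, writing $A=\bm{Q^c}\cdot \bm{w^{-c}}$ and $B=\bm{Q^*}\cdot \bm{w^{-c}}$, we have
\[
f(\bm{w^c}) \;:=\; u_{frac}(c,\bm{w^{-c}}+\bm{w^c}) \;=\; \frac{A+\bm{Q^c}\cdot \bm{w^c}}{B+\bm{Q^*}\cdot \bm{w^c}}.
\]
The key structural fact I will use is that a Möbius function along any line segment (with denominator of fixed sign) is either strictly monotonic or constant, since its derivative on the segment has constant sign. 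This single observation drives both directions.

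For the ($\Rightarrow$) direction, assume $\bm{w^c}$ with support $J$ is a best response of value $v^*$. Pick $j\in J$; the case $|J|=1$ is trivial, so assume $0<w^c_j<W^c$. Let $\bm{e}_j$ denote the focused investment placing all of $W^c$ on issue $j$, and consider the line $\bm{z}(t)=(1-t)\bm{e}_j+t\bm{w^c}$, extended past $t=1$ until a coordinate first vanishes. Since $w^c_j<W^c$, the segment reaches $t=W^c/(W^c-w^c_j)>1$ while remaining in the strategy simplex. Restricted to this segment, $f$ is Möbius in $t$; because $\bm{w^c}$ sits strictly in the interior of the segment and attains the global maximum $v^*$, strict monotonicity is impossible and $f$ must be constant along the segment. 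Hence $f(\bm{e}_j)=v^*$, so $\bm{e}_j$ is itself a best response.

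For the ($\Leftarrow$) direction, assume each focused $\bm{e}_j$, $j\in J$, achieves the common optimal value $v^*$. Expanding $f(\bm{e}_j)=v^*$ gives $A+Q^c_j W^c=v^*(B+Q^*_j W^c)$, which rearranges to $Q^c_j - v^* Q^*_j = (v^*B-A)/W^c$, a quantity independent of $j\in J$. Writing $\bm{w^c}=\sum_{j\in J}\alpha_j \bm{e}_j$ with $\alpha_j=w^c_j/W^c\ge 0$ and $\sum_{j\in J}\alpha_j=1$, and substituting this common-value identity into both numerator and denominator of $f(\bm{w^c})$, the terms telescope to yield $f(\bm{w^c})=v^*$; hence $\bm{w^c}$ is a best response.

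The main obstacle I anticipate is the careful setup of the line segment in the ($\Rightarrow$) direction: it must be chosen so as to provably stay inside the feasible simplex while placing $\bm{w^c}$ strictly in its interior, which is what validates the ``interior maximum implies constant'' conclusion---anchoring the segment at a vertex $\bm{e}_j$ of the face spanned by $J$ makes the nonnegativity bookkeeping routine. The ($\Leftarrow$) direction is then essentially routine algebra once the common-value identity $Q^c_j - v^* Q^*_j = (v^*B-A)/W^c$ has been extracted from the vertex equalities.
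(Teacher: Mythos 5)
Your proof is correct, and its two halves relate to the paper's argument differently. The ($\Rightarrow$) direction is essentially the paper's own argument: the paper also exploits that $u_{frac}$ restricted to a line segment is a fractional-linear function whose derivative has constant sign (\Cref{frac:best_middle_response}), and places the optimal split investment strictly inside a feasible segment so that constancy follows. Your explicit parametrization $\bm{z}(t)=(1-t)\bm{e}_j+t\bm{w^c}$ extended to $t=W^c/(W^c-w^c_j)$ is in fact a cleaner instantiation than the paper's, which perturbs $\bm{z}$ by an abstract $\epsilon$ via a separate feasibility lemma (and is somewhat loose about the direction of the perturbation needed to make $\bm{z}$ an \emph{interior} point of the segment). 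The ($\Leftarrow$) direction is where you genuinely diverge: the paper proves it by induction on $|J|$, repeatedly merging two coordinates and reapplying \Cref{frac:best_middle_response}, whereas you extract the common-value identity $Q^c_j - v^* Q^*_j = (v^*B-A)/W^c$ from the vertex equalities and verify $f(\bm{w^c})=v^*$ by direct substitution. Your route is shorter and avoids the induction entirely; the paper's route buys uniformity, reusing the same convexity lemma that also powers the existence of a pure equilibrium in \Cref{frac:pure_equilibrium_exists}. The only points to make explicit are that the denominator $B+\bm{Q^*}\cdot\bm{w^c}$ (the total expected vote count) is positive on the whole segment, and that you are using the paper's convention (via the dummy issue $0$) that the full budget is always spent, so that $\sum_{j\in J}w^c_j=W^c$.
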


The proof of the theorem rests on two main lemmas.

\begin{restatable}{lemma}{fracBestMiddleResponse}
    \label{frac:best_middle_response}

    Given two feasible investments $\bm{x}, \bm{y}$ of $\bm{c}$, and an investment $\bm{z} = r \cdot \bm{x} + (1 - r) \cdot \bm{y} : 0 < r < 1$ between them, as responses against pure strategies $\bm{w^{-c}}$ of the other candidates, $\bm{z}$ is a best response if and only if $\bm{x}, \bm{y}$ are best responses.
\end{restatable}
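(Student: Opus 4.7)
The plan is to exploit the fact that, with the opponents' investments $\bm{w^{-c}}$ held fixed, the utility $u_{frac}^c$ is a \emph{linear-fractional function} of $c$'s own investment, and linear-fractional functions are monotone along every line segment in their domain. This monotonicity collapses both directions of the biconditional almost immediately.

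Concretely, by Theorem~\ref{model:final_simplification}, once $\bm{w^{-c}}$ is fixed,
\[
 u_{frac}^c(\bm{w^c}+\bm{w^{-c}}) \;=\; \frac{\bm{Q^c}\cdot(\bm{w^c}+\bm{w^{-c}})}{\bm{Q^*}\cdot(\bm{w^c}+\bm{w^{-c}})},
\]
a ratio of two affine functions of $\bm{w^c}$. The feasible set of investments is convex (defined by non-negativity and the budget constraint, which can be taken as an equality after invoking Lemma~\ref{model:investing_in_zero}), so the segment $\bm{z}(r)=r\bm{x}+(1-r)\bm{y}$, $r\in[0,1]$, lies entirely inside it. Restricted to this segment, the utility takes the form $g(r)=(a+br)/(c+dr)$ with constants $a,b,c,d$ depending on $\bm{x},\bm{y},\bm{w^{-c}}$, and with $c+dr>0$ throughout. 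A one-line derivative calculation gives $g'(r)=(bc-ad)/(c+dr)^2$, whose sign is constant in $r$, so $g$ is monotone on $[0,1]$: strictly increasing, strictly decreasing, or constant.

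Both directions now follow. Let $u^*$ denote the best-response value of $c$ against $\bm{w^{-c}}$. If $\bm{z}=\bm{z}(r^*)$ for some $r^*\in(0,1)$ is a best response, then $g(r^*)=u^*$ while $g(0),g(1)\le u^*$ by feasibility; a strictly monotone $g$ on $[0,1]$ cannot attain its supremum at an interior point under these inequalities, so $g$ must be constant on $[0,1]$. Hence $g(0)=g(1)=u^*$, i.e., both $\bm{x}$ and $\bm{y}$ are best responses. Conversely, if $g(0)=g(1)=u^*$, monotonicity forces $g$ to be constant on $[0,1]$, so $g(r)=u^*$ for every $r\in[0,1]$ and in particular $\bm{z}$ is a best response.

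The main obstacle is conceptual rather than technical: one has to recognize the linear-fractional structure, after which the monotonicity argument is routine. The only minor check is strict positivity of the denominator $\bm{Q^*}\cdot(\bm{w^c}+\bm{w^{-c}})$ along the entire segment, which follows from convexity of the feasible set together with non-degeneracy (at least one voter has positive probability of voting at $\bm{w^{-c}}+\bm{x}$ and at $\bm{w^{-c}}+\bm{y}$, and a positive affine function is positive on the convex hull of positive inputs).
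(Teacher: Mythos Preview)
Your proof is correct and follows essentially the same approach as the paper: both exploit that $u_{frac}^c$ restricted to the segment between $\bm{x}$ and $\bm{y}$ is a linear-fractional function of the parameter, compute its derivative to show the sign is constant, and then use the resulting monotonicity (or constancy) to derive both directions of the biconditional. The paper's write-up differs only cosmetically in how the constants are named and in handling the ``only if'' direction via a direct contradiction rather than your supremum-at-interior-point phrasing.
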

This follows from the structure of $\upar$, and the fact that $
\frac{\partial \uparp{c}{r \cdot \bm{x} + (1 - r) \cdot \bm{y}} }{\partial r}
$ does not change sign for $0\leq r \leq 1$.  

\begin{restatable}{lemma}{feasibleInvestmentInVicinity}
    Let $\bm{z}$ be a feasible investment with a strictly positive investment in a set $J \subseteq I$ of issues, and let $\bm{x}$ be an investment with positive investments only in issues in $J$. 
    
    There exists $\epsilon > 0$ such that $\bm{y} = \bm{z} + \epsilon \cdot (\bm{x} - \bm{z})$ is a feasible investment.
\end{restatable}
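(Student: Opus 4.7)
The investment $\bm{y}$ is an affine move from the feasible point $\bm{z}$ in the direction $\bm{x} - \bm{z}$, and my plan is to check directly that, for $\epsilon > 0$ small enough, $\bm{y}$ continues to satisfy the two defining constraints of a feasible investment: coordinate-wise non-negativity and the budget bound $\sum_i y_i \le W^c$. Because the feasible region is a convex polytope containing both $\bm{z}$ and $\bm{x}$, the only subtle point to watch is preventing the perturbation from immediately leaving some boundary face of the polytope.

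For non-negativity I would split on whether $i \in J$. The hypothesis on $\bm{x}$ forces $x_i = 0$ for $i \notin J$, so at those coordinates $y_i = (1-\epsilon) z_i$ is automatically non-negative for $\epsilon \le 1$, using $z_i \ge 0$ from feasibility of $\bm{z}$; critically, this is where the support condition on $\bm{x}$ is doing real work, since $z_i$ itself may already equal $0$ at such an $i$ and any positive push in the $\bm{x}$ direction there would otherwise be unsafe. For $i \in J$ we have the strict inequality $z_i > 0$, so $y_i = z_i + \epsilon(x_i - z_i)$ is a continuous function of $\epsilon$ equal to $z_i > 0$ at $\epsilon = 0$ and therefore remains non-negative in a neighborhood of $0$; an explicit upper bound is $z_i/(z_i - x_i)$ whenever $x_i < z_i$, which is strictly positive.

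For the budget constraint, since $\bm{y}$ is a convex combination of the feasible investments $\bm{z}$ and $\bm{x}$ for $\epsilon \in [0,1]$, the bound $\sum_i y_i \le W^c$ holds automatically. Taking $\epsilon$ to be the minimum of $1$ with the finitely many positive upper bounds obtained above yields an explicit positive $\epsilon$ for which $\bm{y}$ is feasible. I do not anticipate a substantive obstacle here; the whole proof is a coordinate-wise feasibility check, with the only real content being the observation that the support hypothesis on $\bm{x}$ is precisely what lets us ignore the $i \notin J$ boundary faces.
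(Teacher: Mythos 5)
Your proof is correct and follows essentially the same route as the paper's: a coordinate-wise non-negativity check that splits on whether the coordinate lies in the support $J$ (where $z_i>0$ gives room to move) or outside it (where the support hypothesis forces $x_i=0$ so $y_i=(1-\epsilon)z_i\ge 0$), combined with the observation that the budget constraint is preserved by the convex combination. The paper phrases the small-$\epsilon$ bound via the Archimedean property rather than your explicit $z_i/(z_i-x_i)$, but the argument is the same.
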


We return to the proof of \Cref{frac:focus_optimal_iff_split_is}:

\begin{proof}
    Let $\bm{z}$ be a best response to $\bm{w^{-c}}$ with a strictly positive investment in a set $J \subseteq I$ of issues, and let $\bm{x}$ be a response with positive investments only in issues in $J$.

    By the above lemma, there exists $\epsilon > 0$ such that $\bm{y} = \bm{z} + \epsilon \cdot (\bm{x} - \bm{z})$ is a feasible investment.

    Since $\bm{z}$ is between $\bm{x}, \bm{y}$, by \Cref{frac:best_middle_response} it holds that $\bm{x}, \bm{y}$ are also best responses, and in particular $\bm{x}$ is.
    
    For the other direction, the proof is by induction. We assume that every response involves positive investment only in up to $n$ issues in $J$ is optimal, and show that every response with positive investment in up to $n+1$ issues in $J$ is optimal.

    Let $\bm{z}$ be an investment in $n+1$ issues in $J$, assumed without a loss of generality to be the first $n+1$ issues. Let $\bm{x} = (z_1 + z_{n+1}, z_2, z_3, ..., z_n, 0, 0, ..., 0)$ and $\bm{y} = (0, z_2, z_3, ..., z_n, z_1 + z_{n+1}, 0, 0, ..., 0)$. It can be seen that $\bm{z}$ is between $\bm{x}, \bm{y}$, and that $\bm{x}, \bm{y}$ each involves positive investments only in a set of $n$ issues, and must then be best responses. Hence, by \Cref{frac:best_middle_response} it holds that $\bm{z}$ is also a best response.

    The claim for $n=1$ holds by assumption.
\end{proof}

The theorem implies directly that there always exists a best-response which is focused. Accordingly, finding a best-response strategy is easy: compute the utility of all focused investments and choose the best.

The best response strategies can further be characterized using the following definition and lemmas.

\begin{restatable}{proposition}{utilityDependsOn}
    The utility of a candidate $c$ from investing in issue $i$ is characterized only by $Q^c_i$ and $Q^{-c}_i = \sum_{c'\neq c} Q^{c'}_i$.
\end{restatable}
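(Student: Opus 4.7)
The plan is to unpack the aggregated formula from Theorem \ref{model:final_simplification} and separate the terms that depend on the choice of issue $i$ from those that do not. Starting from $\uparp{c}{\bm{w}} = r(c,\bm{w}) = \frac{\bm{Q^c} \cdot \bm{w}}{\bm{Q^*} \cdot \bm{w}}$, I would split each inner product according to $c$'s focused investment on issue $i$ versus the fixed opponent profile $\bm{w}^{-c}$. Recall also the obvious identity $Q^*_i = Q^c_i + Q^{-c}_i$, which will let me rewrite the denominator without ever referring to $\bm{Q^*}$ as a separate object.

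Concretely, suppose $c$ invests its entire budget $W^c$ on issue $i$ (and $0$ elsewhere). Then the numerator becomes $\bm{Q^c} \cdot \bm{w} = Q^c_i \cdot W^c + \bm{Q^c} \cdot \bm{w}^{-c}$, and using $Q^*_i = Q^c_i + Q^{-c}_i$ the denominator becomes $\bm{Q^*} \cdot \bm{w} = (Q^c_i + Q^{-c}_i) \cdot W^c + \bm{Q^*} \cdot \bm{w}^{-c}$. The two residual sums $\bm{Q^c} \cdot \bm{w}^{-c}$ and $\bm{Q^*} \cdot \bm{w}^{-c}$ are fully determined by the opponents' fixed investments and by the precomputed vectors $\bm{Q^c}, \bm{Q^*}$, and in particular they do \emph{not} depend on the choice of $i$. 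Hence the utility from focusing on issue $i$ is a function of $Q^c_i$ and $Q^{-c}_i$ alone, against a fixed background --- which is the claim.

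There is no real obstacle here: the proposition is essentially a syntactic rewriting of Theorem \ref{model:final_simplification}. The only point worth making explicit is that the identity $Q^*_i = Q^c_i + Q^{-c}_i$ is exactly what allows the denominator to be expressed in terms of the pair $(Q^c_i, Q^{-c}_i)$, so that $Q^*_i$ itself never needs to be listed as a third independent parameter.
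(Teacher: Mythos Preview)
Your proof is correct and follows essentially the same approach as the paper: decompose $\bm{Q^c}\cdot\bm{w}$ and $\bm{Q^*}\cdot\bm{w}$ into the issue-$i$ contribution and a background term independent of $i$, then apply $Q^*_i=Q^c_i+Q^{-c}_i$ to express the denominator via the pair $(Q^c_i,Q^{-c}_i)$. The paper's version is marginally more general---it writes the expected utility against a mixed opponent profile and isolates the $w^c_i$ term within an arbitrary $\bm{w}^c$ rather than assuming a focused investment---but the algebraic content is identical.
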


\begin{restatable}{proposition}{superiorIssuesAreBetter}
\label{frac:superior_issues_are_better}
    For a candidate $c$ and issues $i, j$, if $Q^c_i \ge Q^c_j, Q^{-c}_i \le Q^{-c}_j$, there exists a best response with no investment in $j$.
    
    If one of the inequalities is strict, every response with a positive investment in $j$ is not a best response.
\end{restatable}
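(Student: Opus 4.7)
The plan is to prove the proposition by a simple \emph{transfer argument}, using the simplified form $u_{frac}(c,\bm{w}) = \frac{\bm{Q^c}\cdot\bm{w}}{\bm{Q^*}\cdot\bm{w}}$ from \Cref{model:final_simplification}. Given any response $\bm{w^c}$ of $c$ with $w^c_j>0$, I would consider the ``swapped'' response $\tilde{\bm{w}}^c$ obtained by zeroing out the investment in $j$ and adding it to $i$: $\tilde w^c_j = 0$, $\tilde w^c_i = w^c_i + w^c_j$, with all other coordinates unchanged. This is still feasible (same total spend). Writing $\Delta=w^c_j>0$ and letting $\bm{w},\tilde{\bm{w}}$ denote the corresponding totals (with $\bm{w^{-c}}$ fixed), the claim will follow from comparing $u_{frac}(c,\tilde{\bm{w}})$ to $u_{frac}(c,\bm{w})$.

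Next, I would clear denominators and evaluate the sign of $u_{frac}(c,\tilde{\bm{w}})-u_{frac}(c,\bm{w})$ explicitly. Setting $N=\bm{Q^c}\cdot\bm{w}$ and $D=\bm{Q^*}\cdot\bm{w}$, one has $\bm{Q^c}\cdot\tilde{\bm{w}} = N + (Q^c_i-Q^c_j)\Delta$ and $\bm{Q^*}\cdot\tilde{\bm{w}} = D + (Q^*_i-Q^*_j)\Delta$. Using $Q^*_i = Q^c_i + Q^{-c}_i$ and $D-N = \bm{Q^{-c}}\cdot\bm{w}$, a short computation reduces the cross-multiplied difference to
\begin{equation*}
\Delta\cdot\bigl[(Q^c_i-Q^c_j)\,(\bm{Q^{-c}}\cdot\bm{w}) \;+\; (Q^{-c}_j-Q^{-c}_i)\,(\bm{Q^c}\cdot\bm{w})\bigr].
\end{equation*}
Under the hypotheses $Q^c_i\ge Q^c_j$ and $Q^{-c}_i\le Q^{-c}_j$, both summands are products of nonnegative quantities, so this expression is nonnegative. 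Hence the swap weakly improves $u_{frac}$. Starting from any best response, iterating the swap across all issues $j$ for which a sibling $i$ dominates yields a best response with no investment in $j$, proving the first half.

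For the strict statement, when one of the two inequalities is strict the corresponding summand above is strictly positive, so $u_{frac}(c,\tilde{\bm{w}})>u_{frac}(c,\bm{w})$, and the original response cannot be a best response. The main obstacle I anticipate is the degenerate corner where one of the companion factors $\bm{Q^{-c}}\cdot\bm{w}$ or $\bm{Q^c}\cdot\bm{w}$ vanishes, which would make a summand zero even when the inequality on the $Q$ side is strict. These have to be dismissed separately: if $\bm{Q^{-c}}\cdot\bm{w}=0$ then $u_{frac}(c,\bm{w})=1$, which is a trivial best response and no transfer is required; if $\bm{Q^c}\cdot\bm{w}=0$ while $Q^c_i>0$ (forced by $Q^c_i>Q^c_j\ge 0$), then $\bm{Q^c}\cdot\tilde{\bm{w}}>0$ and $u_{frac}$ rises from $0$ to a positive value. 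With these edge cases handled, the swap argument closes the proof.
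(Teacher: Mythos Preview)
Your transfer argument is correct and is essentially the paper's own approach: both proofs compare $u_{frac}$ before and after moving mass from $j$ to $i$ and reduce the sign question to the nonnegativity of
\[
(Q^c_i-Q^c_j)\,(\bm{Q^{-c}}\cdot\bm{w}) \;+\; (Q^{-c}_j-Q^{-c}_i)\,(\bm{Q^c}\cdot\bm{w}),
\]
which the paper reaches after a longer expansion with its $b^{\bm{w^{-c}},\bm{w^c}}$ and $B^{\bm{w^{-c}},\bm{w^c}}$ notation. The one scope difference is that the paper proves the inequality termwise against a \emph{finite-support mixed} profile of the opponents, summing over each $\bm{w^{-c}}$ in the support, whereas you keep $\bm{w^{-c}}$ fixed. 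Your argument extends to the mixed case by linearity of expectation (the swap weakly helps for every realization, hence in expectation), but you should say so explicitly, since the proposition is later invoked against mixed play (e.g., in the proof of \Cref{max:nash_equilibrium_doesnt_exist}).

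One small wrinkle in your edge-case discussion: for the strict half you write that if $\bm{Q^{-c}}\cdot\bm{w}=0$ then $u_{frac}=1$ and ``no transfer is required''. That resolves the \emph{existence} claim, not the strict one---if the current response already attains $1$ it \emph{is} a best response despite $w^c_j>0$, which is precisely what the strict statement rules out. Note, however, that this corner forces $Q^{-c}_j=0$ and hence $Q^{-c}_i=Q^{-c}_j=0$; so when the strictness is $Q^{-c}_i<Q^{-c}_j$ the corner is vacuous, and when it is $Q^c_i>Q^c_j$ alone one needs an implicit nondegeneracy assumption. The paper's proof does not address this corner either, so you are at least as careful as the original.
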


In other words, considering issues, candidates care only about their own ranking and the sum of the ranking of others, not about how this sum is split. With everything else equal, it's better to invest in issues for which your ranking is higher, or issues for which the sum of the others' rankings is lower.

\subsection{Nash Equilibrium}
\begin{theorem}
    \label{frac:pure_equilibrium_exists}
In the parliamentary setting,  there always exists a Nash equilibrium with pure strategies (but possibly split investments).
\end{theorem}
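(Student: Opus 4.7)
The plan is to derive the result from the Debreu--Fan--Glicksberg generalization of Nash's theorem, by verifying its standard hypotheses for the parliamentary priming game. Thanks to \Cref{model:investing_in_zero} I may assume each candidate spends its entire budget, so candidate $c$'s pure-strategy set is the simplex $\Delta^c = \{\bm{w}^c \in \mathbb{R}_{\geq 0}^{|I|}: \sum_{i \in I} w_i^c = W^c\}$, which is nonempty, compact, and convex. Via \Cref{model:final_simplification} the payoff reduces to $\uparp{c}{\bm{w}} = \frac{\bm{Q^c}\cdot\bm{w}}{\bm{Q^*}\cdot\bm{w}}$, which is jointly continuous on the product of simplices wherever the denominator is bounded away from zero.

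The essential step is to verify that $\uparp{c}{\cdot}$ is quasi-concave in its own argument $\bm{w}^c$ for every fixed opposing profile $\bm{w}^{-c}$. Writing $\bm{w} = \bm{w}^c + \bm{w}^{-c}$ and setting $a = \bm{Q^c}\cdot\bm{w}^{-c}$ and $b = \bm{Q^*}\cdot\bm{w}^{-c}$, the utility becomes the linear-fractional function
\[
\uparp{c}{\bm{w}^c+\bm{w}^{-c}} \;=\; \frac{a+\bm{Q^c}\cdot\bm{w}^c}{b+\bm{Q^*}\cdot\bm{w}^c}.
\]
Wherever the denominator has constant sign, the upper level set $\{\bm{w}^c:\uparp{c}{\bm{w}^c+\bm{w}^{-c}}\ge\lambda\}$ equals the intersection of $\Delta^c$ with the halfspace $(\bm{Q^c}-\lambda\bm{Q^*})\cdot\bm{w}^c \ge \lambda b - a$, hence is convex; the symmetric computation on lower level sets shows the payoff is in fact quasi-linear, which is more than what we need.

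With nonempty compact convex strategy sets, joint continuity, and own-strategy quasi-concavity all established, Debreu--Fan--Glicksberg immediately yields a pure-strategy Nash equilibrium. I emphasize that this argument does not promise a \emph{focused} equilibrium: \Cref{frac:focus_optimal_iff_split_is} produces a focused best response against any pure opposing profile, but having all candidates focus simultaneously need not be consistent, which is precisely why the statement concedes ``possibly split investments.''

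The step I expect to fight with is the continuity bookkeeping, i.e.\ ensuring $\bm{Q^*}\cdot\bm{w}$ stays strictly positive over the entire product of simplices. Inspection of \Cref{model:final_simplification} reveals that every coordinate $Q^c_i$ carries the $i$-independent contribution $\frac{1}{W^*}\sum_{v,j}q^v_j(c)s^v_j(0)$, so summing $Q^*_i w_i$ over $i$ picks up this constant times $\sum_i w_i = W^*$, giving the aggregate $\sum_{v,j,c}q^v_j(c)s^v_j(0)$, which is strictly positive under the mild nondegeneracy that some candidate has positive quality on some initially salient issue. Once that is noted the hypotheses of Debreu--Fan--Glicksberg hold uniformly on the product of budget simplices, and the theorem drops out.
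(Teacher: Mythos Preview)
Your argument is correct, and it reaches the same fixed-point theorem (Debreu--Fan--Glicksberg, equivalently Theorem~1.2 in Fudenberg--Tirole) as the paper, but the verification of the convexity hypothesis is genuinely different. The paper explicitly asserts that the parliamentary utility is \emph{not} quasi-concave in the candidate's own investment and therefore substitutes a direct proof that the best-response set is convex, obtained from \Cref{frac:focus_optimal_iff_split_is}: if two best responses use issue sets $J$ and $K$, every focused investment in $J\cup K$ is optimal, hence so is any convex combination. You instead observe that, with $\bm{w}^{-c}$ fixed, $\uparp{c}{\cdot}$ is linear-fractional with strictly positive denominator, whence every upper level set is a halfspace intersected with $\Delta^c$; this establishes quasi-concavity (indeed quasi-linearity) outright and feeds directly into the standard hypotheses.

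Your route is shorter and more transparent for the existence claim, and it reveals that the paper's detour is unnecessary here: the payoff \emph{is} quasi-concave in own strategy once the denominator is shown positive, which you handle via the $i$-independent term $\bar{B}^c/W^*$ in $Q^c_i$. The paper's route, by contrast, leans on \Cref{frac:focus_optimal_iff_split_is}, which is proved anyway and is what powers the equilibrium-finding algorithm; so the structural lemma is not wasted, it just is not needed for existence per se. Your nondegeneracy caveat ($\sum_{v,j,c} q^v_j(c)\,s^v_j(0)>0$) is implicit in the paper as well, since otherwise $r(c,\bm{w})$ can be $0/0$ on the simplex.
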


\begin{proof}
    The proof is by that of Theorem 1.2 in \cite{fudenberg1991game}, and similar to the standard proof of the Nash's theorem (e.g. Theorem 1.1 in the same work).

    The strategies space are indeed convex, as the only constrain for every candidate $c$ is that $\sum_{i \in I} w^c_i = W^C$, and it's easy to see that the utility of every candidate is continuous in the investments of all the candidates.

    Although the utility of every candidate is not quasi-concave, this requirement is used in the proof only to show that the optimal responses of every candidate are convex. Indeed, say that there are two best responses, one involves a set $J$ of issues with positive investments and the other a set $K$. Then, by \Cref{frac:focus_optimal_iff_split_is} every focus investment in any issue in $J \cup K$ is a best response. Then, by the same theorem any split investment in these issues is, hence the best responses are convex.
\end{proof}

\paragraph{Finding the Equilibrium.}
By \Cref{frac:pure_equilibrium_exists}, a pure equilibrium always exists. We explain how to find it.

In a pure equilibrium, every candidate $c$ strictly invests in a set $I^c \subseteq I$ of issues. By \Cref{frac:focus_optimal_iff_split_is}, $c$ is indifferent to any investment that involves only the issues in $I^c$.

The algorithm is as follows.

\paragraph{\bf Algorithm 1: computing Nash equilibrium.}

\label{frac:nash_equilibrium_algorithm}

Iterate over all possible (nonempty) sets of strictly invested issues $I^c \subseteq I : c \in C$. For every iteration, define the following set of linear equations and constraints. 

The variables are the investment of the different candidates, $w^c_i : c \in C, i \in I^c$. For ease of illustration, we let $B^c = \sum_{c' \ne c}\sum_{i' \in I^{c'}}w^{c'}_{i'} \cdot Q^c_{i'}$ and $B^* = \sum_{c' \ne c}\sum_{i' \in I^{c'}}w^{c'}_{i'} \cdot Q^*_{i'}$.

The constraints are:
\begin{enumerate}
    \item The investments are valid: candidates invest exactly their budget, and the investments are non-negative. 
\begin{align*}
    & \sum_{i \in I^c} w^c_i = 1, w^c_i \ge 0 & \forall c \in C, i \in I^c
\end{align*}
\item Candidates are indifferent between focused investments in issues they (strictly) invest in. Let an arbitrary $i^c \in I^c$ be the representative issue of $c$. For $c \in C, i^c \ne i \in I^c$:
\begin{align*}
    & W^c \cdot Q^*_i \cdot B^c + W^c \cdot Q^c_{i^c} \cdot B^* + (W^c)^2 \cdot Q^c_{i^c} \cdot Q^*_i
    & \\
    & = W^c \cdot Q^*_{i^c} \cdot B^c + W^c \cdot Q^c_i \cdot B^* + (W^c)^2 \cdot Q^c_i \cdot Q^*_{i^c}
\end{align*}
\item The candidates prefer their current investment rather than any other focus investment. For $c \in C$, a single representative $i \in I^c$, and every $j \not\in I^c$:
\begin{align*}
    & W^c \cdot Q^*_i \cdot B^c + W^c \cdot Q^c_j \cdot B^* + (W^c)^2 \cdot Q^c_j \cdot Q^*_i
    & \\
    & \le W^c \cdot Q^*_j \cdot B^c + W^c \cdot Q^c_i \cdot B^* + (W^c)^2 \cdot Q^c_i \cdot Q^*_j
\end{align*}
For every iteration, if the above set of equation has a solution, it is a Nash equilibrium and we can stop.

\end{enumerate}

The algorithm is exponential in the number of candidates and issues, as we iterate over all possible subsets of issues for every candidate. However, it is linear in the number of voters, as the simplified problem representation size is independent of the number of voters (\Cref{model:simplification}).

\subsection{Two Candidates}

\subsubsection{Best Response Strategies} 

When there are only two candidates, the structure of the best response strategies can be further refined.

\newcommand{\investin}[4]{\bm{w}(#1\shortto #2,#3\shortto #4)}
For candidates $c,c'$ and issues $i,i'$ denote $\investin{c}{i}{c'}{i'}$ the investment wherein $c$ investment all and only in $i$ and $c'$ all and only in $i'$.

\begin{restatable}{lemma}{investingInHigherIssue}
\label{frac:investing_in_higher_issue}
     For candidates $c,c'$ and issues $i, i', j, j'$, if
     \begin{align}
     & \label{frac:investing_in_higher_issue:prerequisite1}
     Q^c_i < Q^c_{i'}
     & \\
     & \label{frac:investing_in_higher_issue:prerequisite2}
     \upar^c(\investin{c}{i}{c'}{j})<
     \upar^c(\investin{c}{i'}{c'}{j}) & \text{and}, &\\
     & \label{frac:investing_in_higher_issue:prerequisite3}
     \upar^{c'}(\investin{c}{i'}{c'}{j})<
     \upar^{c'}(\investin{c}{i'}{c'}{j'})
     \end{align}
Then
\begin{align}
    \label{frac:investing_in_higher_issue:result}
     \upar^c(\investin{c}{i}{c'}{j'})&<
     \upar^c(\investin{c}{i'}{c'}{j'}) &
\end{align}
\end{restatable}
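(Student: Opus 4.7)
My plan is to reformulate $\upar^c$ at focused investment profiles using \Cref{model:final_simplification}: writing $N^c_{iy} := W^c Q^c_i + W^{c'} Q^c_y$ and $D_{iy} := W^c Q^*_i + W^{c'} Q^*_y$, one has $\upar^c(\investin{c}{i}{c'}{y}) = N^c_{iy}/D_{iy} =: M(i,y)$. Since there are only two candidates, $\upar^{c'} = 1 - \upar^c$, so \eqref{frac:investing_in_higher_issue:prerequisite3} is equivalent to $M(i',j) > M(i',j')$. Setting $\Delta\alpha := Q^c_{i'} - Q^c_i$ and $\Delta\beta := Q^*_{i'} - Q^*_i$, a direct cross-multiplication (using $N^c_{i'y} = N^c_{iy} + W^c\Delta\alpha$ and $D_{i'y} = D_{iy} + W^c\Delta\beta$) will show that \eqref{frac:investing_in_higher_issue:prerequisite2} is equivalent to $N^c_{ij}\Delta\beta < D_{ij}\Delta\alpha$ and that the target \eqref{frac:investing_in_higher_issue:result} is equivalent to $N^c_{ij'}\Delta\beta < D_{ij'}\Delta\alpha$.

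I will then split on the sign of $\Delta\beta$. If $\Delta\beta \le 0$, hypothesis \eqref{frac:investing_in_higher_issue:prerequisite1} forces $\Delta\alpha > 0$, so the left sides of both inequalities are non-positive while the right sides are strictly positive; hence \eqref{frac:investing_in_higher_issue:prerequisite2} and \eqref{frac:investing_in_higher_issue:result} both hold automatically, and \eqref{frac:investing_in_higher_issue:prerequisite3} is not even needed. The substantive case is $\Delta\beta > 0$. Here I set $K := \Delta\alpha/\Delta\beta$; by construction $Q^c_i - KQ^*_i = Q^c_{i'} - KQ^*_{i'}$, so defining $V_y := W^c(Q^c_i - KQ^*_i) + W^{c'}(Q^c_y - KQ^*_y)$ one verifies that $M(i,y) - K = V_y/D_{iy}$ and $M(i',y) - K = V_y/D_{i'y}$ --- crucially with the \emph{same} $V_y$ in both expressions. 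Because $\Delta\beta > 0$ forces $D_{i'y} > D_{iy}$, \eqref{frac:investing_in_higher_issue:prerequisite2} collapses to the one-sign condition $V_j < 0$, and \eqref{frac:investing_in_higher_issue:result} collapses to $V_{j'} < 0$.

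The finishing step is then immediate: \eqref{frac:investing_in_higher_issue:prerequisite3} becomes $V_j/D_{i'j} > V_{j'}/D_{i'j'}$, and since $V_j < 0$ makes the left side strictly negative, the right side --- and therefore $V_{j'}$ --- must be strictly negative as well. The main obstacle I anticipate is discovering the shared linear functional $V_y$: without it the four utilities $M(i,j), M(i',j), M(i,j'), M(i',j')$ couple through four different denominators in an awkward way that resists direct comparison, and the naive ``subtract \eqref{frac:investing_in_higher_issue:prerequisite2} from \eqref{frac:investing_in_higher_issue:result}'' reduction only goes through when $\Delta\alpha\delta\beta \ge \Delta\beta\delta\alpha$, which need not hold a priori. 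Once one notices that the points $(Q^c_i, Q^*_i)$ and $(Q^c_{i'}, Q^*_{i'})$ lie on the common line $\alpha = K\beta + \mathrm{const}$ in the $(\alpha,\beta)$-plane, however, the whole problem reduces to the signs of just two scalars $V_j, V_{j'}$ and the three hypotheses snap together.
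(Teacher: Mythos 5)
Your proof is correct, and it takes a genuinely different route from the paper's. The paper argues by contradiction with an algebraic certificate: it normalizes both budgets to $1$, converts \eqref{frac:investing_in_higher_issue:prerequisite2}, \eqref{frac:investing_in_higher_issue:prerequisite3} and the negation of \eqref{frac:investing_in_higher_issue:result} into three polynomial inequalities $t_1>0$, $t_2>0$, $t_3\ge 0$ in the ranks, and then exhibits nonnegative multipliers (the $t_2$-multiplier $Q^c_{i'}-Q^c_i$ being strictly positive by \eqref{frac:investing_in_higher_issue:prerequisite1}) for which the combination $(Q^c_{i'}+Q^c_{j'})\,t_1+(Q^c_{i'}-Q^c_i)\,t_2+(Q^c_{i'}+Q^c_j)\,t_3$ expands to identically zero --- a contradiction. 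That argument is easy to verify mechanically but gives little insight, and the multipliers have to be guessed. Your shared-numerator device --- subtracting the constant $K=\Delta\alpha/\Delta\beta$ so that $M(i,y)-K$ and $M(i',y)-K$ carry the \emph{same} numerator $V_y$ and differ only in their ordered denominators --- reduces all three hypotheses and the conclusion to the signs of the two scalars $V_j,V_{j'}$, which makes the monotone structure transparent; it also cleanly isolates the degenerate case $\Delta\beta\le 0$, where the conclusion already follows from \eqref{frac:investing_in_higher_issue:prerequisite1} and \eqref{frac:investing_in_higher_issue:prerequisite3} is not even needed (conversely, in the case $\Delta\beta>0$ your argument never invokes \eqref{frac:investing_in_higher_issue:prerequisite1}, which is consistent but worth remarking). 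Two small points to make explicit when writing this up: you need $W^c>0$ both to divide it out of the cross-multiplied inequalities and to conclude $D_{i'y}>D_{iy}$ from $\Delta\beta>0$ (this is forced, since $W^c=0$ would make \eqref{frac:investing_in_higher_issue:prerequisite2} an equality), and all denominators $D_{iy}$ must be taken strictly positive --- the same implicit assumption the paper itself makes whenever it cross-multiplies utilities.
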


In other words, if one candidate prefers one issue over another, and the other candidate responds, the first candidate still prefers the first issue over the second, assuming its ranking on it is higher. This lemma will allow us to compute the Nash equilibrium, as we later show.

\subsubsection{Nash Equilibrium}

We will show that every 2-candidates game has a pure and focus equilibrium, such that every candidate is investing all his budget in a single issue. This can be shown by \Cref{frac:investing_in_higher_issue}: we let each candidate begin by investing in the issue on they are ranked the lowest. We then let each, in turns, respond by investing in the least higher-ranked issue that is a better response than their current investment (if exists). By the above lemma, this process will necessarily terminate with a Nash equilibrium.

Formally, the algorithm to find an equilibrium, given $C = \{c, c'\}$, is as follows:

\begin{algorithm}
\caption{Compute Nash Equilibrium for two candidates.}
\label{frac:2_candidates_equilbirium_algorithm}

\begin{enumerate}
    \item Let $i \leftarrow \underset{i' \in I}{\text{argmin }} Q^c_{i'}, j \leftarrow \underset{j' \in I}{\text{argmin }} Q^{c'}_{j'}$

    \item While true:
        \begin{enumerate}
            \item $I' = \{i' \in I : \upar^c(\investin{c}{i'}{c'}{j}) > \upar^c(\investin{c}{i}{c'}{j}) \land Q^c_{i'} > Q^c_i\}$

            \item $J' = \{j' \in I : \upar^c(\investin{c}{i}{c'}{j'}) > \upar^c(\investin{c}{i}{c'}{j}) \land Q^c_{j'} > Q^c_j\}$

            \item If $I' \ne \emptyset$, $i \leftarrow \underset{i' \in I'}{\text{argmin }} Q^c_{i'}$
            \item Else if $J' \ne \emptyset$, $j \leftarrow \underset{j' \in J'}{\text{argmin }} Q^c_{j'}$
            \item Else return $(i, j)$
        \end{enumerate}
\end{enumerate}
\end{algorithm}

\begin{restatable}{theorem}{twoCandidatesFocusPureEquilibrium}
    \label{frac:2_candidates_focus_pure_equilbirium}
    \label{frac:2_candidates_algorithm_complexity}
    
In the parliamentary setting, in the case of 2 candidates there always exists a Nash equilibrium with pure strategies and focused investments. \Cref{frac:2_candidates_equilbirium_algorithm} results in such equilibrium with time complexity of  $O(|I|^2 + |V|)$.
\end{restatable}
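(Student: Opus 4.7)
The plan is to establish, for Algorithm~\ref{frac:2_candidates_equilbirium_algorithm}, three things: it (i) terminates in $O(|I|)$ iterations, (ii) returns a pair of focused investments that form a pure Nash equilibrium, and (iii) each iteration costs $O(|I|)$ time after an $O(|V|)$ preprocessing step. The algorithm is a restricted best-response dynamic in which each candidate is only allowed to ``upgrade'' to a strictly higher-ranked issue, and the purpose of \Cref{frac:investing_in_higher_issue} is to guarantee that such upgrades do not destabilize the opponent's prior preferences.

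Termination is immediate: each modification of $i$ strictly increases $Q^c_i$ and each modification of $j$ strictly increases $Q^{c'}_j$, so at most $2|I|$ loop iterations occur. For the Nash property I would maintain, by induction on the iteration count, the invariant that at state $(i_t, j_t)$, for candidate $c$ every focused deviation $i^*$ with $Q^c_{i^*}\le Q^c_{i_t}$ is weakly dominated by $i_t$ against $j_t$, and symmetrically for $c'$. Combined with the loop exit condition---no beneficial strictly-higher-ranked deviation exists---this invariant says no focused deviation is beneficial at all, and \Cref{frac:focus_optimal_iff_split_is} extends the argument to arbitrary split deviations, yielding Nash. In the inductive step, when $c$ upgrades from $i_t$ to $i_{t+1}$, low-rank deviations ($Q^c_{i^*}\le Q^c_{i_t}$) remain dominated by transitivity through $i_t$ (using the IH together with the strict improvement that triggered the upgrade), and intermediate-rank deviations ($Q^c_{i_t}<Q^c_{i^*}<Q^c_{i_{t+1}}$) are ruled out because the algorithm picks the \emph{minimum-rank} beneficial upgrade. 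When $c'$ upgrades from $j_t$ to $j_{t+1}$, the nontrivial case is preserving $c$'s invariant: for any $i^*$ with $Q^c_{i^*}<Q^c_{i_t}$, I apply \Cref{frac:investing_in_higher_issue} with $(i,i',j,j')=(i^*,i_t,j_t,j_{t+1})$, using the IH as hypothesis~(\ref{frac:investing_in_higher_issue:prerequisite2}) and $c'$'s beneficial upgrade as hypothesis~(\ref{frac:investing_in_higher_issue:prerequisite3}), to conclude dominance at $(i_t,j_{t+1})$.

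For complexity, \Cref{model:final_simplification} lets us compute $\bm{Q^c}$ and $\bm{Q^{c'}}$ in $O(|V|)$ time, after which every focused-profile utility is computable in $O(1)$ from the closed forms. Each iteration evaluates at most $O(|I|)$ such profiles to form $I'$ and $J'$, and there are $O(|I|)$ iterations, totalling $O(|I|^2+|V|)$. The main obstacle I foresee is handling ties in the rank values $Q^c_i$: both the strict-rank hypothesis of the lemma and the base case of the invariant require care when equal ranks occur. I would address this by a fixed tiebreaking rule on issues, combined with an appeal to \Cref{frac:superior_issues_are_better} to argue that tied-rank alternatives need not be separately tracked.
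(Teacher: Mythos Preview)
Your proposal is correct and follows essentially the same approach as the paper's proof: both maintain by induction the invariant that all lower-ranked focused deviations are dominated by the current issue, use \Cref{frac:investing_in_higher_issue} to carry this invariant across the opponent's upgrade, and combine it with the loop's exit condition (plus \Cref{frac:focus_optimal_iff_split_is}) to conclude Nash; the complexity argument is identical. Your explicit flagging of the tie-handling issue and the appeal to \Cref{frac:superior_issues_are_better} is slightly more careful than the paper's presentation, which glosses over equal ranks.
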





    


\section{Presidential Elections} \label{sec:presidential}
We consider the various settings of presidential elections.  

The following example shows that, in contrast to $\upar$'s \Cref{frac:focus_optimal_iff_split_is}, candidates can benefit from splitting their investment between more than one issue.  
Here, the only way for $c_1$ to win (partially) is to invest half in $1$ and half in $2$, which is the only best response. This example holds for all three presidential utility functions - $\upresind, \upresplus$, and $\upresmax$.  
\begin{table}[h]
\centering
\begin{tabular}{@{\extracolsep{4pt}}rrrr}
\toprule
& $c_1$ & $c_2$ & $c_3$ \\ 
\midrule
$Q^c_1$ & $1$ & $2$ & $0$ \\
$Q^c_2$ & $1$ & $0$ & $2$ \\
$W^c$ & $1$ & $0$ & $0$ \\
\bottomrule
\end{tabular}
\end{table}


\subsection{Victory Indicator, $\upresind$}

\label{ind:section}

\subsubsection{Two Candidates}

The case of two candidates is especially important for the presidential case. The following is intuitive.
\begin{restatable}{theorem}{twoCandidatesDominantStrategies}
    \label{ind:2_candidates_dominant_strategies}
    For the $\upresind$ presidential utility, and the case of two candidates $c, c'$, a dominant strategy for $c$ is to invest all budget in the issue $i \in I$ that maximizes $Q^c_i - Q^{c'}_i$.
\end{restatable}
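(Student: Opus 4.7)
The plan is to reduce the problem to maximizing a linear functional over the budget simplex. With only two candidates, \Cref{model:final_simplification} gives $v(c, \bm{w}) = 1$ if $\bm{Q^c}\cdot\bm{w} > \bm{Q^{c'}}\cdot\bm{w}$, $v(c,\bm{w}) = 1/2$ in case of a tie, and $0$ otherwise. In particular, $\upresindp{c}{\bm{w}}$ is a (weakly) monotone function of the single scalar
\[
\Phi(\bm{w}) \;=\; (\bm{Q^c} - \bm{Q^{c'}})\cdot \bm{w}.
\]
Hence, to establish weak dominance of a strategy $\bm{w}^c$ for candidate $c$, it suffices to show that it maximizes $\Phi(\bm{w}^c + \bm{w}^{c'})$ pointwise over $\bm{w}^{c'}$.

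Next I would exploit linearity: because $\bm{w} = \bm{w}^c + \bm{w}^{c'}$, the quantity $\Phi(\bm{w})$ splits as $(\bm{Q^c}-\bm{Q^{c'}})\cdot\bm{w}^c + (\bm{Q^c}-\bm{Q^{c'}})\cdot\bm{w}^{c'}$. The second summand is fixed once $c'$'s strategy is fixed, so for every response $\bm{w}^{c'}$ the candidate $c$ independently wants to maximize $(\bm{Q^c}-\bm{Q^{c'}})\cdot\bm{w}^c$ subject to $\sum_i w^c_i = W^c$ and $w^c_i \ge 0$ (recall we may assume the entire budget is spent by \Cref{model:investing_in_zero}).

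This is a linear program over a scaled simplex, and its maximum is attained at a vertex, namely full investment $W^c \cdot e_{i^\ast}$ in the issue $i^\ast \in \argmax_{i\in I}(Q^c_i - Q^{c'}_i)$. For this choice, $\Phi(\bm{w}^c + \bm{w}^{c'}) \ge \Phi(\tilde{\bm{w}}^c + \bm{w}^{c'})$ for every alternative feasible $\tilde{\bm{w}}^c$ and every $\bm{w}^{c'}$; monotonicity of $\upresind$ in $\Phi$ then transfers the inequality to utilities, yielding weak dominance.

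I expect no real obstacle: the only subtlety is being careful that $\upresind$ is only weakly (not strictly) monotone in $\Phi$ — it is constant on the three regions $\Phi>0$, $\Phi=0$, $\Phi<0$ — but weak monotonicity is enough for weak dominance, which is what the theorem claims. The argument generalizes immediately to mixed strategies of $c'$ by linearity of expectation in the second summand.
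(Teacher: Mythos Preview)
Your proposal is correct and follows essentially the same route as the paper's own proof: both reduce to the scalar $\Phi(\bm{w}) = (\bm{Q^c}-\bm{Q^{c'}})\cdot\bm{w}$, split it additively into $c$'s and $c'$'s contributions, observe that $\upresind$ is (weakly) monotone in $\Phi$, and conclude by maximizing a linear form over the budget simplex. Your write-up is in fact slightly more careful than the paper's in distinguishing weak from strict monotonicity and in invoking \Cref{model:final_simplification} explicitly rather than conflating $p(c,\bm{w})$ with $\bm{Q^c}\cdot\bm{w}$.
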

As the goal is to be the candidate with the most votes, and in a two-candidate game it is entirely aligned with candidates maximizing the difference between the votes of themselves and the other.
It is straightforward then that a Nash equilibrium always exists and how to find it.

\subsubsection{General Number of Candidates}

\begin{theorem}
    \label{ind:best_response_exists}
    For the $\upresind$ presidential utility, a best response (against any finite support strategy profile) always exists.
\end{theorem}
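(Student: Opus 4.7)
The plan is to exploit a feature of $\upresind$ that is not shared by $\upresplus$ or $\upresmax$: as a function of the total investment $\bm{w}$, the victory indicator $v(c,\bm{w})$ takes values only in the \emph{finite} set $\{0\}\cup\{1/k : 1\le k\le |C|\}$. Once this is noted, showing that a supremum over the feasible simplex is attained reduces to a combinatorial finiteness argument, bypassing the usual continuity machinery (which in fact fails here, since $v$ is not even upper semicontinuous at ties).

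Concretely, fix a candidate $c$ and let $\bm{S}^{-c}$ be a finite-support strategy profile of the others, with supports $\bm{w}^{-c,1},\dots,\bm{w}^{-c,m}$ and weights $p_1,\dots,p_m$. For a pure response $\bm{w}^c$ from the non-empty feasible simplex $\Delta_c = \{\bm{w}^c \ge 0 : \sum_{i\in I} w^c_i = W^c\}$, candidate $c$'s expected utility is
\[
U(\bm{w}^c) \;=\; \sum_{j=1}^{m} p_j \cdot v\bigl(c,\; \bm{w}^c + \bm{w}^{-c,j}\bigr).
\]
Each summand lies in the $(|C|{+}1)$-element set above, so $U$ itself takes at most $(|C|{+}1)^m$ distinct values on $\Delta_c$.

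The conclusion is then immediate: a real-valued function with non-empty finite image attains its supremum, so some $\bm{w}^{c,*}\in\Delta_c$ realises $U(\bm{w}^{c,*}) = \sup_{\bm{w}^c\in\Delta_c} U(\bm{w}^c)$. By linearity of expectation, any mixed response of $c$ achieves an expected utility equal to a convex combination of values of $U$, and therefore cannot strictly exceed $U(\bm{w}^{c,*})$; this pure $\bm{w}^{c,*}$ is consequently a best response overall, not only among pure strategies.

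I do not expect a real obstacle here. The only point needing verification is the finiteness claim on the image of $v$, which is read off directly from its definition in Section \ref{sec:model}. The analogous argument fails for $\upresplus$ and $\upresmax$, and indeed a best response need not exist there (\Cref{plus:no_best_response}, \Cref{max:best_response_doesnt_exist}), precisely because adding the continuous term $r(c,\bm{w})$ enlarges the image of the utility to a continuum, so the finite-image trick no longer applies.
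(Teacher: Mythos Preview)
Your proof is correct and follows essentially the same approach as the paper's: both arguments observe that, since the opponents' profile has finite support and $v(c,\cdot)$ takes only the finitely many values $\{0\}\cup\{1/k:1\le k\le |C|\}$, the expected utility $U$ has a finite image over $\Delta_c$ and therefore attains its supremum. The paper phrases this as the impossibility of an infinite strictly-increasing sequence of responses, while you phrase it directly as ``finite image $\Rightarrow$ maximum exists,'' but the core idea is identical.
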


\begin{proof}
    If a best response does not exist, then an infinite series of strictly better responses exists. However, since the number of possible investment profiles of the other candidates is finite, and for every investment profile the number of possible utility values for the responding candidate is finite (losing and tie with another $k$ candidates $0\leq k<|C|$). Hence, the overall number of outcomes is finite, hence such an infinite series cannot exist. 
\end{proof}

A possible problem with the $\upresind$ utility function is that losing candidates are indifferent to the number of votes they get, which may result in a counter-intuitive equilibrium. Consider the following example:

\begin{table}[h]
\centering
\begin{tabular}{@{\extracolsep{4pt}}rrrr}
\toprule
& $c_1$ & $c_2$ & $c_3$ \\ 
\midrule
$Q^c_1$ & $1$ & $0$ & $0$ \\
$Q^c_2$ & $0$ & $1$ & $0$ \\
$Q^c_3$ & $0$ & $0$ & $1$ \\
$W^c$ & $1$ & $1$ & $1$ \\
\bottomrule
\end{tabular}
\end{table}

In the above game, the case where all candidates invest all their budget in issue $1$ is an equilibrium, as $c_1$ still wins regardless if one of the other candidates changes his action. However, this equilibrium looks unnatural since these investments are clearly against their own interest. Although it's an equilibrium, we don't expect real-world campaigns to result in such an equilibrium.

In order to make the resulting equilibria more natural, we will slightly change the utility function such that losing candidates also care about the number of votes they get.

\subsection{Indicator Plus Fraction, $\upresplus$}

When summing the victory indicator (multiplied by a constant larger than $|C|$) with the votes fraction, both winning and losing candidates still try to maximize their share of votes. Hence, an unnatural Nash equilibrium such as that discussed in \Cref{ind:section} is not possible.

\subsubsection{Two Candidates}
\label{plus:two_candidates}

For two candidates, the game is in fact similar to that of $\upar$, the parliamentary case, and not to that of $\upresind$, the other presidential game.

\begin{restatable}{theorem}{plusTwoCandidatesFracEquilibrium}
\label{plus:2_candidates_frac_equilbirium}
    Every pure, focused Nash equilibrium for two candidates under $\upar$  is also a Nash equilibrium under $\upresplus$.
\end{restatable}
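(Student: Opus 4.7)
The plan is to show that if $(i,j)$ is a pure focused Nash equilibrium under $\upar$, where $c$ invests all its budget in issue $i$ and $c'$ all of its in $j$, then neither candidate can profit from any deviation under $\upresplus$. Two facts drive the argument. First, since $\upar^c = r(c,\cdot)$, the parliamentary NE property at $(i,j)$ bounds $r(c, \bm{w}^c, j) \le r^*$ for every feasible pure investment $\bm{w}^c$ of $c$, where $r^* := r(c, i, j)$; mixed deviations reduce to pure ones by linearity of expectation. The symmetric bound holds for $c'$.

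Second, in the two-candidate case the victory indicator is a monotone step function of the vote-share fraction. Explicitly, $\bm{Q^*} = \bm{Q^c} + \bm{Q^{c'}}$ gives $r(c, \bm{w}) + r(c', \bm{w}) = 1$ at every profile, so $v(c, \bm{w})$ equals $1$, $1/2$, or $0$ according as $r(c, \bm{w})$ is greater than, equal to, or less than $1/2$. Hence $r(c, \bm{w}^c, j) \le r^*$ immediately implies $v(c, \bm{w}^c, j) \le v(c, i, j)$.

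Combining the two facts, for any deviation $\bm{w}^c$ of $c$,
\begin{align*}
\upresplusp{c}{\bm{w}^c, j} &= \indicatorconstant \cdot v(c, \bm{w}^c, j) + r(c, \bm{w}^c, j) \\
&\le \indicatorconstant \cdot v(c, i, j) + r^* = \upresplusp{c}{i, j},
\end{align*}
and the symmetric computation handles deviations by $c'$, so $(i,j)$ is a Nash equilibrium under $\upresplus$.

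The main obstacle is the discontinuity of $v$ at $r = 1/2$: a priori, a deviation that decreases $r$ only slightly could boost $v$ upward by $1/2$ or $1$, giving a net gain amplified by the large constant $\indicatorconstant$. The two-candidate structure neutralises this threat because $v$ is weakly monotone in $r$, so an improvement in $v$ would require an improvement in $r$, which the parliamentary NE property already forbids. An analogous argument fails for three or more candidates, since then $r$ and $v$ decouple (a candidate can gain or lose the lead while its own share shrinks), which is consistent with the paper's negative results on $\upresplus$ beyond two candidates.
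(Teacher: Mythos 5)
Your proof is correct and rests on the same key observation as the paper's: with two candidates $r(c,\cdot)+r(c',\cdot)=1$, so the victory indicator $v$ is a weakly monotone (step) function of the vote share $r$, and hence no deviation can raise $v$ without raising $r$, which the parliamentary equilibrium already rules out. The paper phrases this as a proof by contradiction while you argue directly (and you additionally make explicit the reduction of mixed deviations to pure ones), but the substance is identical.
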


Intuitively, $\upar, \upresind$ are similar in the case of two candidates because, when there is only one candidate to play against, maximizing the share of votes is entirely aligned with trying to be the candidate with the most votes.

Based on the above theorem we have:
\begin{corollary}
    \label{plus:2_candidates_equilbirium_algorithm}
    For the $\upresplus$ presidential utility, with two candidates, a pure Nash equilibrium with focus investments always exists, and can be found using $\Cref{frac:2_candidates_equilbirium_algorithm}$.
\end{corollary}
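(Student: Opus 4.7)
The plan is to simply chain the two results already established for the two--candidate case. First I would invoke Theorem \ref{frac:2_candidates_focus_pure_equilbirium}, which guarantees that in the parliamentary setting with two candidates there always exists a pure, focused Nash equilibrium under $\upar$, and that Algorithm \ref{frac:2_candidates_equilbirium_algorithm} produces such an equilibrium in time $O(|I|^2 + |V|)$. Call the output pair $(i,j)$, corresponding to investments $\investin{c}{i}{c'}{j}$.

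Next I would apply Theorem \ref{plus:2_candidates_frac_equilbirium}, which states that every pure, focused Nash equilibrium for two candidates under $\upar$ is also a Nash equilibrium under $\upresplus$. Since the profile $\investin{c}{i}{c'}{j}$ returned by Algorithm \ref{frac:2_candidates_equilbirium_algorithm} is, by construction, both pure and focused, it qualifies as an equilibrium under $\upresplus$ as well. This proves both claims of the corollary: existence of a pure focused Nash equilibrium under $\upresplus$, and the fact that Algorithm \ref{frac:2_candidates_equilbirium_algorithm} computes one (within the same time bound, since no additional computation is required).

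There is essentially no obstacle here; the corollary is a direct composition of the two previously stated results and requires no new argument. The only thing to be mildly careful about is to note explicitly that the algorithm's guarantees (termination, focused pure output, and the $O(|I|^2 + |V|)$ complexity) transfer unchanged, because we are running exactly the same procedure and relying only on the fact that the resulting profile satisfies the equilibrium condition under a different utility function.
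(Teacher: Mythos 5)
Your proposal is correct and matches the paper's intended argument exactly: the corollary is obtained by chaining Theorem~\ref{frac:2_candidates_focus_pure_equilbirium} (Algorithm~\ref{frac:2_candidates_equilbirium_algorithm} yields a pure, focused equilibrium under $\upar$) with Theorem~\ref{plus:2_candidates_frac_equilbirium} (any such equilibrium is also one under $\upresplus$). No gaps.
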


\subsubsection{General Number of Candidates}
For a general number of candidates, the situation differs significantly.

\begin{restatable}{theorem}{plusNoBestResponse}
    \label{plus:no_best_response}
    With the $\upresplus$ presidential utility, a best investment need not exist, even when only one candidate has a positive investment budget.
\end{restatable}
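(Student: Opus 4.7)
The plan is to exhibit an explicit counterexample with three candidates $c_1, c_2, c_3$ in which only $c_1$ has a positive budget, and argue that the supremum of $c_1$'s $\upresplus$ utility over its feasible investments is not attained. The structural phenomenon I want to exploit is a downward discontinuity of $\upresplus$: when $c_1$'s investment crosses from a region in which it strictly wins into a boundary where it is tied with exactly one other candidate, $v(c_1, \bm{w})$ drops from $1$ to $1/2$, costing $V/2 \ge |C|/2$ in utility, and no continuous change in $r(c_1, \bm{w})$ (which lies in $[0,1]$) can compensate. So if $c_1$'s strict-win region, viewed as a subset of its investment simplex, is a nonempty open set and if $r(c_1, \cdot)$ is maximized over its closure on the boundary, the supremum is approached but never attained.

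To realize this, I would use just two issues and choose the ranks so that, writing $\bm{w}^{c_1} = (\alpha, 1-\alpha)$, the scores $p(c_j, \bm{w})$ obtained via \Cref{model:final_simplification} are affine in $\alpha$ with $c_2$ dominating for small $\alpha$ and $c_3$ dominating for large $\alpha$; this leaves a strict-win interval for $c_1$ in the middle, with a unique tie against $c_2$ at one endpoint and against $c_3$ at the other. A concrete choice such as $Q^{c_1}_1 = 2, Q^{c_2}_1 = 0, Q^{c_3}_1 = 3$ together with $Q^{c_1}_2 = 1, Q^{c_2}_2 = 2, Q^{c_3}_2 = 0$ (with $W^{c_1}=1$, $W^{c_2}=W^{c_3}=0$) produces a strict-win interval of $\alpha \in (1/3, 1/2)$, and a one-line differentiation shows that $r(c_1, \alpha)$ is strictly monotone in $\alpha$. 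Hence on the open interval the utility reads $V + r(\alpha)$ with supremum attained only in the limit $\alpha \to 1/2^{-}$.

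The remaining step is to verify that nothing outside the open interval matches this supremum. At each boundary endpoint the tie drops $v$ to $1/2$, so utility equals $V/2$ plus a value of $r$ bounded by the sup of $r$ over the closure; outside the closed interval, $v = 0$ and utility reduces to $r$ alone. Invoking $V \ge |C| = 3$ makes each of these strictly smaller than the (unattained) supremum on the open strict-win region, so no best response exists. The main obstacle is really just arranging the numerical parameters so that (i) the strict-win interval is nonempty, (ii) $r$ is monotone in the direction that puts its supremum on the ``correct'' boundary, and (iii) the tying opponent is unique at that boundary so that $v$ drops all the way to $1/2$ rather than further; once such parameters are fixed, the rest is direct arithmetic.
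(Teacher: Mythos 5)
Your proposal is correct and follows essentially the same route as the paper's proof: a three-candidate, two-issue instance in which only one candidate invests, the strict-win set for that candidate is an interval that is open at the endpoint toward which $r$ increases, and the discontinuous drop of $V\cdot v$ at that endpoint (at least $V/2\ge |C|/2>1$) cannot be offset by the bounded gain in $r$, so the supremum $V+r$ is approached but never attained. Your concrete numbers check out ($p(c_1)=1+\alpha$, $p(c_2)=2-2\alpha$, $p(c_3)=3\alpha$, win region $(1/3,1/2)$, $r(c_1,\alpha)=\tfrac{1+\alpha}{3+2\alpha}$ strictly increasing), and the only cosmetic difference from the paper's example is that your win region is open on both sides while the paper's is the half-open interval $[0,1/2)$.
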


In the proof we show a game where, for the investing candidate, one issue is better in order to maximize his share of votes, and any minimal investment in the other issue is required to win. That way, the investing candidate wants to invest "almost all" in the first and the rest in the other issue. However, since it's always possible to invest even more in the first issue and even less in the second, a best investment does not exist.
It is straightforward now that,

\begin{corollary}
    \label{plus:no_nash_equilibrium}
    With the $\upresplus$ presidential utility, a Nash equilibrium need not exist.
\end{corollary}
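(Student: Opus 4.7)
The plan is to use the game from the proof of Theorem \ref{plus:no_best_response} as the witness directly. That theorem is already phrased in an unusually strong form: it exhibits a game in which a single candidate $c^\star$ has a positive budget and every other candidate has $W^{c'} = 0$. Any candidate with zero budget has exactly one feasible investment, namely $\bm{0}$, and hence no strategic freedom whatsoever --- not even in mixed strategies, because the support is forced to be the singleton $\{\bm{0}\}$. So in this particular game, the entire strategic content of any (pure or mixed) profile is the strategy chosen by $c^\star$.

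I would then argue that no Nash equilibrium can exist in this game. Any Nash equilibrium requires $c^\star$ to play a best response against the fixed opponent profile $\bm{w}^{-c^\star} = \bm{0}$. Theorem \ref{plus:no_best_response} rules out any pure best response against this profile. To upgrade to mixed strategies, let $M = \sup_{\bm{w}^{c^\star}} \upresplus^{c^\star}(\bm{w}^{c^\star}, \bm{0})$; since this supremum is not attained on any pure investment, any mixed strategy $\sigma$ of $c^\star$ satisfies $\mathbb{E}_{\bm{w}^{c^\star}\sim\sigma}\bigl[\upresplus^{c^\star}(\bm{w}^{c^\star},\bm{0})\bigr] < M$, because equality would force $\sigma$ to be supported on pure maximizers and there are none. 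Hence no mixed strategy of $c^\star$ is a best response either, and no Nash equilibrium can exist.

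There is no real obstacle here: the construction and the hard analytic work have already been done inside Theorem \ref{plus:no_best_response}. The only additional ingredient is the standard observation that a mixed best response against a fixed opponent profile must be supported on pure best responses, so the non-existence of a pure best response propagates to the non-existence of a mixed one. Combining these two steps yields the corollary.
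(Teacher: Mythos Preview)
Your proposal is correct and follows the same route as the paper: reuse the single-investing-candidate game from Theorem~\ref{plus:no_best_response}, observe that the zero-budget candidates are strategically inert, and conclude that any Nash equilibrium would force $c^\star$ to have a best response to the fixed profile $\bm{0}$, which was shown not to exist. Your treatment is in fact more careful than the paper's one-line justification, since you explicitly rule out mixed best responses via the unattained-supremum argument, a step the paper leaves implicit.
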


Since with only one investing candidate a Nash equilibrium is characterizes a best response.

\subsection{Maximum of Indicator and Fraction, $\upresmax$}

When taking the maximum between the victory indicator (times $V$) and the votes fraction, only losing candidates will try to maximize their share of votes.

\subsubsection{Two Candidates}
\label{max:two_candidates}

Similarly to \Cref{plus:two_candidates}, every equilibrium under $\upar$ is an equilibrium under $\upresmax$, and the proof is done in a similar manner.


\subsubsection{General Number of Candidates}

The difference between $\upresmax$ and $\upresplus$ is that for the former winning candidates are agnostic to their fraction of votes. As a result, the case of \Cref{plus:no_best_response}, where the only investing candidate wants to invest as much as possible in one issue to maximize his fraction, and any positive amount in another to ensure his victory, does not happen here. Indeed,

\begin{restatable}{theorem}{maxPureBestResponseExists}
    \label{max:pure_best_response_exists}
    With the $\upresmax$ presidential utility, a best response to pure strategies always exists.
\end{restatable}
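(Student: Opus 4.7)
The plan is to exploit the piecewise-constant structure of the victory indicator together with compactness of the strategy simplex. Fix a pure opponent profile $\bm{w}^{-c}$ and let $\Delta = \{\bm{w}^c : w^c_i \geq 0,\ \sum_i w^c_i = W^c\}$, which is compact. By \Cref{model:final_simplification} each score $\bm{Q^{c'}} \cdot \bm{w}$ is continuous and affine in $\bm{w}^c$, and $r(c,\bm{w})$ is continuous on the set where $\bm{Q^*} \cdot \bm{w} > 0$.

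I would first partition $\Delta$ into the ``winner-alone'' region $W_1 = \{\bm{w}^c : \bm{Q^c}\cdot\bm{w} > \bm{Q^{c'}}\cdot\bm{w}\ \text{for all}\ c'\neq c\}$, the ``tied with exactly $k$ others'' regions $T_k$ for $k = 1,\ldots,|C|-1$, and the ``loser'' region $L$. On $W_1$ the utility $\upresmax$ equals $\indicatorconstant$; on $T_k$ it equals $\indicatorconstant/(k+1)$; on $L$ it equals $r(c,\bm{w}) \leq 1$. The model assumption $\indicatorconstant \geq |C|$ gives $\indicatorconstant/(k+1) \geq 1 \geq r$, so the $\max$ in the definition of $\upresmax$ simplifies cleanly within each region.

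The proof then splits into three cases. If $W_1 \neq \emptyset$, any point of $W_1$ attains the global maximum $\indicatorconstant$. If $W_1 = \emptyset$ but some $T_k$ is nonempty, set $k^\ast = \min\{k : T_k \neq \emptyset\}$, a minimum over a nonempty finite subset of $\{1,\ldots,|C|-1\}$; any point of $T_{k^\ast}$ attains $\indicatorconstant/(k^\ast+1)$, which strictly dominates every other $T_k$-value (since $k > k^\ast$) and every value on $L$. Finally, if $W_1$ and all $T_k$ are empty, then $L = \Delta$ is compact and the utility reduces to the continuous $r(c,\cdot)$, so Weierstrass delivers an argmax.

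The main obstacle is conceptual rather than technical: one must explain why this argument succeeds here but fails for $\upresplus$ in \Cref{plus:no_best_response}. Under $\upresplus$, a winning candidate still cares about the vote fraction, producing the ``almost all mass on issue $A$, an infinitesimal sliver on issue $B$ to secure the win'' pathology, whose supremum is never attained. Under $\upresmax$ that tension disappears: once $c$ is in the argmax the $r$-term is dominated and irrelevant, so the utility on each of $W_1$ and $T_k$ is a constant independent of $\bm{w}^c$. Consequently, mere non-emptiness of the best-available region is enough; we do not need the regions to be topologically closed, nor do we need $\sup_L r$ to be attained inside $L$.
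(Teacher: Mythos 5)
Your proof is correct and follows essentially the same route as the paper's: the paper packages your win/tie case as a reduction to the best response under $\upresind$ (which is automatically a best response for $\upresmax$ whenever it achieves a win or a tie, precisely because the utility there is the constant $\indicatorconstant/(k+1)\ge 1\ge r$), and packages your always-losing case as a reduction to the best response under $\upar$, which is exactly your Weierstrass argument on $L=\Delta$. Your explicit region decomposition, and your remark on why the $\upresplus$ ``sliver'' pathology disappears, are a more detailed spelling-out of the same two-case idea rather than a different argument.
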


However,

\begin{restatable}{theorem}{maxBestResponseDoesntExist}
    \label{max:best_response_doesnt_exist}
    With the $\upresmax$ presidential utility, a best response need not exist.
\end{restatable}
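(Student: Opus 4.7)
By \Cref{max:pure_best_response_exists}, a best response to any pure opponent profile already exists under $\upresmax$, so any counterexample must involve a mixed strategy for at least one of the other candidates. My plan is to exhibit a small game together with a finite-support mixed profile of the other candidates under which candidate $c$'s expected utility admits a supremum that is approached but never attained. The source of non-attainment to be exploited is the following jump: for each pure opponent profile $j$, the term $U_j(\bm{w}) = \max(\indicatorconstant \cdot v(c,\bm{w}_j), r(c,\bm{w}_j))$ equals $\indicatorconstant$ on the interior of the winning region $W_j$, drops to $\indicatorconstant/|T|$ on the tie-boundary $\partial W_j$ (with $|T|\ge 2$ the size of the tie), and equals $r(c,\bm{w}_j)\le 1$ outside $\overline{W_j}$.

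The construction will involve two pure opponent profiles $A$ and $B$ in the mixed support, with positive probabilities $p_A$ and $p_B$, engineered so that (i) against $A$ candidate $c$ wins on a non-empty open region $W_A$ of its strategy simplex; (ii) against $B$ candidate $c$ loses throughout, so $U_B(\bm{w}) = r_B(\bm{w})$ everywhere; and (iii) the maximum of $r_B$ over $\overline{W_A}$ is attained strictly on the tie-boundary $\partial W_A$, not at any interior point. Granting (i)--(iii), for $\bm{w}$ in the interior of $W_A$ the expected utility equals $p_A \indicatorconstant + p_B r_B(\bm{w})$, which can be made arbitrarily close to $p_A\indicatorconstant + p_B r_B^*$ (with $r_B^* := \sup_{\overline{W_A}} r_B$) but never attains it. At any point of $\partial W_A$ the first term collapses to at most $p_A\indicatorconstant/2$, which combined with $p_B r_B^*$ is strictly smaller than the sup; and outside $\overline{W_A}$ the utility is at most $p_A r_A + p_B r_B \le 1$, dominated by $p_A\indicatorconstant$ once $\indicatorconstant\ge |C|$. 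Hence the supremum is approached but not attained, and $c$ has no best response.

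The main obstacle is realizing condition (iii). Since $r_B(\bm{w})$ is a ratio of linear forms in $\bm{w}$, its maximum over the simplex is attained at a vertex; so (iii) demands that the $r_B$-maximizing vertex lie outside $W_A$ while $W_A$ itself is non-empty. In the two-candidate, two-issue setting this is impossible, because the direction in which $r_B$ grows and the direction in which $c$'s winning margin against $A$ grows are both governed by $c$'s own relative issue strengths and are therefore coupled. I will therefore work with at least three candidates (or with additional issues), exploiting the presence of a third player whose fixed investment shifts $\partial W_A$ without altering the vertex where $r_B$ is maximized, and tuning the $Q$-vectors so that this vertex lies on the losing side of the $A$-hyperplane. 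The remaining verifications — that the supremum is strictly approached from inside $W_A$, that every boundary point of $W_A$ gives strictly less, and that $\indicatorconstant \ge |C|$ suffices to dominate every losing configuration — are routine once the geometric setup is in place.
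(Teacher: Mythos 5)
Your proposal takes essentially the same approach as the paper's proof: the paper's counterexample (four candidates, two issues, with $c_2$ mixing between all-in-issue-$1$ and all-in-issue-$2$, and two zero-budget candidates $c_3,c_4$ shaping the geometry) is a concrete instance of exactly your schema --- profile $A$ is the support point against which a split investment $(a,1-a)$ with $0<a<1$ wins alone, profile $B$ is the one against which $c_1$ loses and $r(c_1,\cdot)$ is strictly increasing in $a$, so the supremum is approached as $a\to 1^-$ but the victory term collapses to a tie at $a=1$. The only piece your plan leaves unspecified is the explicit $Q$-matrix realizing your condition (iii), which the paper obtains precisely via the zero-budget candidates you anticipate.
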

As in a case where the responding candidate has a chance to win and to lose, depending on the investments of the other candidates, the game is similar to that of $\upresplus$, as he cares about both the fraction of votes and whether he wins.

Finally, we can see that,

\begin{restatable}{theorem}{maxNashEquilibriumDoesntExist}
    \label{max:nash_equilibrium_doesnt_exist}
    With the $\upresmax$ presidential utility, a Nash equilibrium need not exist.
\end{restatable}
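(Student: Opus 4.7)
The plan is to exhibit a concrete game instance and prove that no strategy profile, pure or mixed, can be a Nash equilibrium. The argument builds directly on Theorem~\ref{max:best_response_doesnt_exist}, which already produces instances under $\upresmax$ where, against a certain opponent profile, a candidate's supremum utility is approached but not attained (because the $\max$ branch of $\upresmax$ switches discontinuously at the winning threshold).

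First, I would revisit the game from the proof of Theorem~\ref{max:best_response_doesnt_exist} and isolate the opponent strategy profile $\sigma_{-c}$ against which the responding candidate $c$ has no best response. Recall from Theorem~\ref{max:pure_best_response_exists} that against \emph{pure} opponents a best response exists, so $\sigma_{-c}$ must be mixed (or at least include randomization in the critical dimensions). The non-existence arises because $c$ wants to place an infinitesimal positive mass on one issue to tip a tie toward winning, while placing the remainder on another issue to maximize the fraction of votes in the losing branch.

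Next, I would argue that in any hypothetical Nash equilibrium of the same (or a minimally augmented) game, the opponents' part of the profile must coincide with $\sigma_{-c}$. The cleanest route is to choose parameters so that, given any fixed strategy of $c$, the opponents' best responses form a correspondence whose fixed point is precisely $\sigma_{-c}$. For example, if the game from Theorem~\ref{max:best_response_doesnt_exist} features only a single non-trivial opponent, its unique (possibly mixed) best response to each of $c$'s strategies pins down $\sigma_{-c}$ automatically; if more opponents are involved, I would add symmetric or cyclic incentive structures among them (with reciprocal quality scores on extra issues) so that indifference conditions force exactly the mixing probabilities of $\sigma_{-c}$.

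Finally, the contradiction is immediate: if $(S_c, S_{-c})$ were a Nash equilibrium, then $S_{-c} = \sigma_{-c}$ by the construction, but then by Theorem~\ref{max:best_response_doesnt_exist} candidate $c$ has no best response against $\sigma_{-c}$, so $S_c$ cannot be a best response, contradicting the equilibrium assumption. The main obstacle is the middle step: ensuring that the opponents' equilibrium behavior is pinned to the problematic profile without destroying the non-attainment of $c$'s supremum. This calibration is delicate because the discontinuities of $\upresmax$ at the winning boundary interact with the opponents' indifference conditions, and one must verify that perturbations of $\sigma_{-c}$ are ruled out by the opponents' strict preferences rather than merely weak ones.
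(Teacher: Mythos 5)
Your high-level strategy is the right one, and it is essentially the strategy the paper uses: show that in any hypothetical equilibrium the opponents are forced into a (mixed) profile against which the responding candidate has no best response, and conclude that no equilibrium exists. You also correctly observe, via \Cref{max:pure_best_response_exists}, that the problematic opponent profile must involve mixing. However, as written your proposal has a genuine gap, and you name it yourself: the ``middle step'' of pinning the opponents' equilibrium behavior to the problematic profile is not an obstacle to be calibrated later --- it \emph{is} the proof. You provide no concrete game, no quality scores, and no verification that the pinning works, so nothing is actually established.

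Two concrete points of comparison with the paper's argument. First, the paper does not first prove \Cref{max:best_response_doesnt_exist} and then augment the game; it proves both theorems simultaneously on a single instance with four candidates (two of them with zero budget, serving only to shape the payoffs) and two issues, with quality scores perturbed by a small $\epsilon$. The zero-budget candidates play the role of your ``added incentive structures,'' but the actual work is an exhaustive case analysis: rule out each pure profile for $c_1$ and $c_2$, rule out $c_2$ splitting (a losing $c_2$ strictly prefers to focus on issue $2$ to maximize its vote fraction), and conclude that $c_2$ must mix between focusing on issue $1$ and issue $2$ with probabilities $x$ and $1-x$, both positive. Second, the pinning need not identify a \emph{unique} $\sigma_{-c}$ as you propose; it suffices to show that every candidate equilibrium profile for the opponents lies in a family (here, all mixes with $x>0$) against each member of which $c_1$'s utility $x\cdot V + (1-x)\cdot r(c_1,(a,2-a))$ is strictly increasing in $a$ on the open interval $0<a<1$ and hence has no maximizer. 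Relaxing uniqueness to a family is what makes the case analysis tractable; insisting on a unique fixed point, as your plan does, would make the ``delicate calibration'' harder than necessary. To turn your proposal into a proof you would need to supply an explicit instance and carry out this case analysis in full.
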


For a case similar to the described above.

\section{Conclusions and Discussion}

This study investigates voter priming strategies and equilibrium in multi-party, multi-issue elections, providing insights into the computational and strategic aspects of campaign spending. Notably, in all cases where an equilibrium is guaranteed to exist, it takes the form of a pure equilibrium. This outcome simplifies theoretical analysis and computational implementation, offering a clear framework for predicting candidate behavior. Pure equilibria allow candidates to adopt intuitive strategies, eliminating the need for probabilistic approaches that are harder to interpret.

Despite this, the results reveal certain counterintuitive aspects of the model that merit closer examination. In particular, the analysis establishes that in the two-candidate setting equilibrium in single-issue investments always exists. This result, while mathematically sound, diverges from real-world observations where candidates typically distribute their campaign budget across multiple issues. The divergence may be due to the linearity assumption in the model, creating incentives for candidates to concentrate all resources on a single issue.
A more realistic model could postulate diminishing returns on salience investment, which would encourage a more balanced allocation of resources. We leave the study of such a model to future research.

Another promising avenue for future work is to consider scenarios where candidates invest resources not only in priming voter attention on specific issues but also in improving their perceived quality on the different issues. This addition would provide a richer framework that captures the interplay between issue salience and perceived candidate quality, reflecting a broader range of campaign strategies observed in practice.

\section{Acknowledgments}

This research has been partially supported by the Israel Science
Foundation under grant  2544/24.

\clearpage

\bibliography{main}

\clearpage



\section{Appendix}

\subsection{Proofs from \Cref{sec:model}}

\modelInvestingInZero*

We begin by proving the following lemma:

\begin{lemma}
\label{model:first_simplification}

It holds that,

\begin{align}
    & \label{eq:u-big} r(c, \bm{w}) = \frac{\overline{B}^c + \bm{\overline{Q}^c} \cdot \bm{w}}{{\overline{B}^*} + \bm{\overline{Q}^*} \cdot \bm{w}}
    & \\
    & v(c, \bm{w}) = \begin{cases}
        \frac{1}{|\underset{c' \in C}{\argmax} \ \overline{B}^c + \bm{\overline{Q}^c} \cdot \bm{w}|} & c \in \underset{c' \in C}{\argmax} \ \overline{B}^c + \bm{\overline{Q}^c} \cdot \bm{w} \\
        0 & \text{otherwise}
    \end{cases} & \nonumber
\end{align}
For all $c\in C$, where $\overline{B}^c = \sum_{v \in V} \sum_{i\in I} q^v_i(c) \cdot s_i^v(0)$, $\overline{Q}^c_i = \sum_{v \in V} q^v_i(c) \cdot \rho_i$ and ${\overline{B}^*} = \sum_{c' \in C} B^{c'}$, $\bm{\overline{Q}^*} = \sum_{c' \in C} \bm{\overline{Q}^{c'}}$.
\end{lemma}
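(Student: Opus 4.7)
The plan is to exploit a hidden cancellation: although the per-voter relative salience $s_i^v(\bm{w})$ is normalized by $\sum_{j\in I}s_j^v(w_j)$, which a priori depends on $v$, the linearity assumption and the fact that the initial saliences sum to one make this normalizer voter-independent. This is the only nontrivial observation; everything else is bookkeeping.

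First, I would write $\sum_{j\in I}s_j^v(w_j)=\sum_{j\in I}(\rho_j w_j + s_j^v(0))=\sum_{j\in I}\rho_j w_j + 1$, using $\sum_{j\in I}s_j^v(0)=1$. Denote this common quantity by $N(\bm{w})$; crucially $N(\bm{w})$ does not depend on $v$, and is strictly positive. Then
\begin{align*}
p^v(c,\bm{w})=\sum_{i\in I}q^v_i(c)\,\frac{\rho_i w_i + s_i^v(0)}{N(\bm{w})}=\frac{1}{N(\bm{w})}\sum_{i\in I}q^v_i(c)\bigl(\rho_i w_i + s_i^v(0)\bigr).
\end{align*}

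Second, I would sum over $v\in V$, pulling the $v$-independent factor $1/N(\bm{w})$ out and splitting the sum into a constant part and a part linear in $\bm{w}$:
\begin{align*}
p(c,\bm{w})&=\frac{1}{N(\bm{w})}\!\left[\sum_{v\in V}\sum_{i\in I}q^v_i(c)s_i^v(0)+\sum_{i\in I}\rho_i w_i\sum_{v\in V}q^v_i(c)\right]\\
&=\frac{\overline{B}^c+\bm{\overline{Q}^c}\cdot\bm{w}}{N(\bm{w})},
\end{align*}
where the bracketed expression is exactly $\overline{B}^c+\bm{\overline{Q}^c}\cdot\bm{w}$ by the definitions given.

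Third, the formula for $r(c,\bm{w})$ follows because the common factor $1/N(\bm{w})$ cancels between numerator and denominator of $p(c,\bm{w})/\sum_{c'\in C}p(c',\bm{w})$, yielding the claimed expression with $\overline{B}^*$ and $\bm{\overline{Q}^*}$ in the denominator. For $v(c,\bm{w})$, since $N(\bm{w})>0$ is the same positive scalar for every candidate, $\argmax_{c'\in C}p(c',\bm{w})=\argmax_{c'\in C}\bigl(\overline{B}^{c'}+\bm{\overline{Q}^{c'}}\cdot\bm{w}\bigr)$, so the argmax set and its cardinality are unchanged, and the piecewise definition carries over verbatim.

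The only real obstacle is spotting the voter-independence of $N(\bm{w})$; once that is in hand, steps two and three are direct algebraic manipulations and do not require any assumption beyond what \Cref{model:investing_in_zero} already guarantees (namely that all candidates spend their full budget, so we may treat $\bm{w}$ as a fixed feasible profile throughout).
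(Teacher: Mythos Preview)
Your proof is correct and follows essentially the same route as the paper: both hinge on the observation that $\sum_{j\in I}s_j^v(w_j)=1+\bm{\rho}\cdot\bm{w}$ is voter-independent, then pull this common denominator out, identify the numerator as $\overline{B}^c+\bm{\overline{Q}^c}\cdot\bm{w}$, and cancel it in the ratio (and preserve the $\argmax$) to obtain the claimed forms. One small remark: your closing reference to \Cref{model:investing_in_zero} is misplaced, since in the paper this lemma is proved \emph{before} and indeed \emph{used to prove} \Cref{model:investing_in_zero}; the full-budget assumption plays no role here.
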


\begin{proof}
By definition:

\begin{align*}
    & p(c, \bm{w}) = \sum_{v \in V} p^v(c, \bm{w}) = \sum_{v \in V} \sum_{i\in I} q^v_i(c) \cdot s_i^v(\bm{w})
    & \\
    & = \sum_{v \in V} \sum_{i\in I} q^v_i(c) \cdot \frac{s_i^v(w_i)}{\sum_{j\in I}s_j^v(w_j)} 
    & \\
    & = \sum_{v \in V} \sum_{i\in I} q^v_i(c) \cdot \frac{\rho_i w_i+s_i^v(0)}{\sum_{j\in I}(\rho_j w_j+s_j^v(0))}
    & \\
    & = \sum_{v \in V} \sum_{i\in I} q^v_i(c) \frac{\rho_i w_i+s_i^v(0)}{1 + \sum_{j\in I}\rho_j w_j}
    & \\
    & = \frac{\sum_{v \in V} \sum_{i\in I} q^v_i(c) (\rho_i w_i+s_i^v(0))}{1 + \bm{\rho} \cdot \bm{w}}
    & \\
    & = \frac{\sum_{v \in V} \sum_{i\in I} (q^v_i(c) \cdot \rho_i w_i + q^v_i(c) \cdot s_i^v(0))}{1 + \bm{\rho} \cdot \bm{w}}
    & \\
    & = \frac{\overline{B}^c + \sum_{i\in I} w_i \sum_{v \in V} q^v_i(c) \cdot \rho_i}{1 + \bm{\rho} \cdot \bm{w}}
    & \\
    & = \frac{\overline{B}^c + \sum_{i\in I} w_i \cdot \overline{Q}^c_i}{1 + \bm{\rho} \cdot \bm{w}}
    & \\
    & = \frac{\overline{B}^c + \bm{w} \cdot \bm{\overline{Q}^c}}{1 + \bm{\rho} \cdot \bm{w}}
\end{align*}

Hence:

\begin{align*}
    & r(c, \bm{w}) = \frac{p(c, \bm{w})}{\sum_{c'\in C} p(c, \bm{w})} = \frac{\frac{\overline{B}^c + \bm{w} \cdot \bm{\overline{Q}^c}}{1 + \bm{\rho} \cdot \bm{w}}}{\sum_{c'\in C} \frac{B_{c'} + \bm{w} \cdot \bm{\overline{Q}^{c'}}}{1 + \bm{\rho} \cdot \bm{w}}}
    & \\
    & = \frac{\frac{\overline{B}^c + \bm{w} \cdot \bm{\overline{Q}^c}}{1 + \bm{\rho} \cdot \bm{w}}}{\frac{\sum_{c'\in C} B_{c'} + \bm{w} \cdot \bm{\overline{Q}^{c'}}}{1 + \bm{\rho} \cdot \bm{w}}} = \frac{\overline{B}^c + \bm{\overline{Q}^c} \cdot \bm{w}}{\sum_{c'\in C} B_{c'} + \bm{\overline{Q}^{c'}} \cdot \bm{w}}
    & \\
    & = \frac{\overline{B}^c + \bm{\overline{Q}^c} \cdot \bm{w}}{{\overline{B}^*} + \bm{\overline{Q}^*} \cdot \bm{w}}
\end{align*}

For the second part of the lemma, we can see that 

\begin{align*}
    & \underset{c \in C}{\argmax} \ \overline{B}^c + \bm{\overline{Q}^c} = \underset{c \in C}{\argmax} \ \frac{\overline{B}^c + \bm{\overline{Q}^c} \cdot \bm{w}}{{\overline{B}^*} + \bm{\overline{Q}^*} \cdot \bm{w}}
    & \\
    & = \underset{c \in C}{\argmax} \ r(c, \bm{w}) = \underset{c \in C}{\argmax} \ \frac{p(c, \bm{w})}{\sum_{c'\in C} p(c', \bm{w})}
    & \\
    & =  \underset{c \in C}{\argmax} \ p(c, \bm{w})
\end{align*}

Hence

\begin{align*}
    & v(c, \bm{w}) = \begin{cases}
        \frac{1}{|\underset{c' \in C}{\argmax} \ p(c', \bm{w})|} & c \in \underset{c' \in C}{\argmax} \ p(c', \bm{w}) \\
        0 & \text{otherwise}
    \end{cases}
    & \\
    & = \begin{cases}
        \frac{1}{|\underset{c \in C}{\argmax} \ \overline{B}^c + \bm{\overline{Q}^c} \cdot \bm{w}|} & c \in \underset{c \in C}{\argmax} \ \overline{B}^c + \bm{\overline{Q}^c} \cdot \bm{w} \\
        0 & \text{otherwise}
    \end{cases}
\end{align*}

\end{proof}

And now we go back for the proof of \Cref{model:investing_in_zero}:

\begin{proof}
We can see that

\begin{align*}
    & \overline{Q}^c_0 = \sum_{v \in V} q^v_0(c) \cdot \rho_i = \sum_{v \in V} 0 \cdot \rho_i = 0
\end{align*}

For all $c \in C$. Hence:

\begin{align*}
    & \overline{Q}^*_0 = (\sum_{c' \in C} \bm{\overline{Q}^{c'}})_0 = \sum_{c' \in C} \overline{Q}^{c'}_0 = 0
\end{align*}

We know that

\begin{align*}
    & r(c, \bm{w}) = \frac{\overline{B}^c + \bm{\overline{Q}^c} \cdot \bm{w}}{{\overline{B}^*} + \bm{\overline{Q}^*} \cdot \bm{w}} = \frac{\overline{B}^c + \sum_{i \in I} \overline{Q}^c_i \cdot w_i}{{\overline{B}^*} + \sum_{i \in I} \overline{Q}^*_i \cdot w_i}
    & \\
    & = \frac{\overline{B}^c + \overline{Q}^c_0 \cdot w_0 + \sum_{0 \ne i \in I} \overline{Q}^c_i \cdot w_i}{{\overline{B}^*} + \overline{Q}^c_0 \cdot w_0 + \sum_{0 \ne i \in I} \overline{Q}^*_i \cdot w_i}
    & \\
    & = \frac{\overline{B}^c + \sum_{0 \ne i \in I} \overline{Q}^c_i \cdot w_i}{{\overline{B}^*} + \sum_{0 \ne i \in I} \overline{Q}^*_i \cdot w_i}
\end{align*}

Hence, all investments into $w_0$ does not change the utility of any of the players, exactly like not investing this amount in the first place.

\end{proof}

\modelFinalSimplification*

\begin{proof}

It holds that,

\begin{align*}
    & Q^c_i = \frac{\sum_{v \in V} \sum_{i\in I} q^v_i(c) \cdot s_i^v(0)}{W^*} + \sum_{v \in V} q^v_i(c) \cdot \rho_i 
    & \\
    & = \overline{Q}^c_i + \frac{\overline{B}^c_i}{W^*}
    & \\
    & Q^*_i = \sum_{c \in C} Q^c_i = \overline{Q}^*_i + \frac{\overline{B}^*_i}{W^*}
\end{align*}

And we can see that

\begin{align*}
    & r(c, \bm{w}) = \frac{\overline{B}^c + \bm{\overline{Q}^c} \cdot \bm{w}}{{\overline{B}^*} + \bm{\overline{Q}^*} \cdot \bm{w}} = \frac{W^* \cdot \frac{\overline{B}^c}{W^*} + \bm{\overline{Q}^c} \cdot \bm{w}}{W^* \cdot \frac{\overline{B}^*}{W^*} + \bm{\overline{Q}^*} \cdot \bm{w}}
    & \\
    & = \frac{(\frac{\overline{B}^c}{W^*}, \frac{\overline{B}^c}{W^*}, ..., \frac{\overline{B}^c}{W^*}) \cdot \bm{w} + \bm{\overline{Q}^c} \cdot \bm{w}}{(\frac{\overline{B}^*}{W^*}, \frac{\overline{B}^*}{W^*}, ..., \frac{\overline{B}^*}{W^*}) \cdot \bm{w} + \bm{\overline{Q}^*} \cdot \bm{w}}
    & \\
    & = \frac{((\frac{\overline{B}^c}{W^*}, \frac{\overline{B}^c}{W^*}, ..., \frac{\overline{B}^c}{W^*}) + \bm{\overline{Q}^c}) \cdot \bm{w}}{((\frac{\overline{B}^*}{W^*}, \frac{\overline{B}^*}{W^*}, ..., \frac{\overline{B}^*}{W^*}) + \bm{\overline{Q}^*}) \cdot \bm{w}}
    & \\
    & = \frac{\bm{Q^c} \cdot \bm{w}}{\bm{Q^*} \cdot \bm{w}}
\end{align*}

The second equation in the theorem is obtained in an identical manner to that of \Cref{model:first_simplification}.

\end{proof}

\subsection{Proofs from \Cref{sec:parliamentary}}

\fracBestResponseExists*

\begin{proof}
    The utility of a candidate is continuous in his investment, and the constraints for his investment form a closed set. 
\end{proof}

\fracBestMiddleResponse*

\begin{proof}
    Consider, for $0 \le t \le 1$:
        \begin{align*}
        & f(t) = \uparp{c}{\bm{z}} = \uparp{c}{t \cdot \bm{x} + (1 - t) \cdot \bm{y}}
        & \\
        & = \frac{\bm{Q^c} \cdot (\bm{w^{-c}} + t \cdot \bm{x} + (1 - t) \cdot \bm{y})}{\bm{Q^*} \cdot (\bm{w^{-c}} + t \cdot \bm{x} + (1 - t) \cdot \bm{y})}
        & \\
        & = \frac{a + t \cdot b + (1 - t) \cdot d}{A + t \cdot B + (1 - t) \cdot D}
    \end{align*}
    For $a = \bm{Q^c} \cdot \bm{w^{-c}}, b = \bm{Q^c} \cdot \bm{x}, d = \bm{Q^c} \cdot \bm{y}$ and $A = \bm{Q^*} \cdot \bm{w^{-c}}, B = \bm{Q^*} \cdot \bm{x}, D = \bm{Q^*} \cdot \bm{y}$.
    Now, it can be verified that
    \begin{align*}
        f'(t) = -\frac{(B+A) \cdot d - (D + A) \cdot b + (B - D) \cdot a}{((D - B) \cdot t - D - A)^2}
    \end{align*}
    Hence, the sign of $f'(t)$ is constant, and $f$ is strongly monotonic, or constant.

    If $\bm{x}, \bm{y}$ are best responses, they must result in identical utility, hence $f(0) = f(1)$, and the function is constant, hence $f(r) = f(0) = f(1)$ and $\bm{z}$ is also a best response.

    For the other direction, assume $\bm{z}$ is a best response, and at least one of $\bm{x}, \bm{y}$ is not, assuming without a loss of generality that $x$ is not. Then $f(r) \ge f(0), f(r) > f(1)$. However, this contradicts monotonicity, so $\bm{x}, \bm{y}$ both must be best responses.
\end{proof}

\feasibleInvestmentInVicinity*

\begin{proof}
    First, we find $\epsilon > 0$ for which the investment in every issue is not negative. 
    Since $z_i = 0$ means $x_i = 0$, for every $i$ such that $z_i = 0$ it holds that $z_i + \epsilon \cdot (x_i - z_i) \ge 0$. 
    Now, for every $z_j > 0$, it is straightforward from the Archimedean property that there exists $\delta_j > 0$ such that for every $\delta'_j \le \delta_j$, it holds that $z_j < \delta'_j \cdot (z_i - x_i) \ge 0$, hence $z_i + \epsilon \cdot (x_i - z_i) \ge 0$. We then take $\epsilon = \underset{j : z_j > 0}{\text{min }} \delta_j$.

    Second, we ensure that the total sum of the investment does not exceed the budget:
    \begin{align*}
        & \sum_{i \in I} y_i = \sum_{i \in I} (z + \epsilon \cdot (x - z))_i 
        & \\
        & = \sum_{i \in I} z_i + \epsilon \cdot (\sum_{i \in I} z_i - \sum_{i \in I} x_i) 
        & \\
        & = W^c + \epsilon \cdot (W^c - W^c) = W^c
    \end{align*}
We conclude that $y$ is a feasible investment.
\end{proof}

\utilityDependsOn*

\begin{proof}
Denoting $S$ the possible investments of the other players, such that every $\bm{w^{-c}}$ is their total investment with probability $p_{\bm{w^{-c}}}$, the expected utility of $c$ is, for investment $\bm{w^c}$:

\begin{align*}
    & \sum_{w^{-c} \in S} p_{\bm{w^{-c}}} \cdot \frac{\bm{Q^c} \cdot \bm{w^{-c}} + \bm{Q^c} \cdot \bm{w^c}}{\bm{Q^*} \cdot \bm{w^{-c}} + \bm{Q^*} \cdot \bm{w^c}}
    & \\
    & = \sum_{w^{-c} \in S} p_{\bm{w^{-c}}} \cdot \frac{B^c + Q^c_i \cdot w^c_i}{B^* + Q^*_i \cdot w^c_i}
    & \\
    & = \sum_{w^{-c} \in S} p_{\bm{w^{-c}}} \cdot \frac{B^c + Q^c_i \cdot w^c_i}{B^* + (\sum_{c' \in C} Q^{c'}_i) \cdot w^c_i}
    & \\
    & = \sum_{w^{-c} \in S} p_{\bm{w^{-c}}} \cdot \frac{B^c + Q^c_i \cdot w^c_i}{B^* + (Q^c_i + \sum_{c' \ne c} Q^{c'}_i) \cdot w^c_i}
\end{align*}

For $B^c = \bm{Q^c} \cdot \bm{w^{-c}} + \sum_{k \in I - \{i\}} Q^c_k \cdot w^c_k$ and $B^* = \bm{Q^*} \cdot \bm{w^{-c}} + \sum_{k \in I - \{i\}} Q^*_k \cdot w^c_k$.
\end{proof}

\superiorIssuesAreBetter*

\begin{proof}
Denoting $S$ the possible investments of the other players, such that every $\bm{w^{-c}}$ is their total investment with probability $p_{\bm{w^{-c}}}$, the expected utility of $c$ is, for investment $\bm{w^c}$:

\begin{align*}
    & \uparp{c}{\bm{w^c}|S} = \sum_{\bm{w^{-c}} \in S} p_{\bm{w^{-c}}} \cdot \frac{\bm{Q^c} \cdot \bm{w^{-c}} + \bm{Q^c} \cdot \bm{w^c}}{\bm{Q^*} \cdot \bm{w^{-c}} + \bm{Q^*} \cdot \bm{w^c}}
    & \\
    & = \sum_{\bm{w^{-c}} \in S} p_{\bm{w^{-c}}} \cdot \frac{b^{\bm{w^{-c}}, \bm{w^c}} + Q^c_i \cdot w^c_i + Q^c_j \cdot w^c_j}{B^{\bm{w^{-c}}, \bm{w^c}} + Q^*_i \cdot w^c_i + Q^*_j \cdot w^c_j}
\end{align*}

For $b^{\bm{w^{-c}}, \bm{w^c}} = \bm{Q^c} \cdot \bm{w^{-c}} + \sum_{k \in I - \{i, j\}} Q^c_k \cdot w^c_k$ and $B^{\bm{w^{-c}}, \bm{w^c}} = \bm{Q^*} \cdot \bm{w^{-c}} + \sum_{k \in I - \{i, j\}} Q^*_k \cdot w^c_k$.

Considering the expected utility from the two possible investments, $\bm{w^{c, 1}}, \bm{w^{c, 2}}$, the only difference is in the last two terms of the numerator and denominator of the fraction for every summand, as $b^{\bm{w^{-c}}, \bm{w^c}}, B^{\bm{w^{-c}}, \bm{w^c}}$ are independent of $w^c_i, w^c_j$. 

For every $\bm{w^{-c}} \in S$, we claim that its summand in $\uparp{c}{\bm{w^{c, 2}}|S}$ is greater than in $\uparp{c}{\bm{w^{c, 1}}|S}$. To see this:

\begin{align*}
    & \frac{b^{\bm{w^{-c}}, \bm{w^c}} + Q^c_i \cdot w^{c, 2} + Q^c_j \cdot w^{c, 2}}{B^{\bm{w^{-c}}, \bm{w^c}} + Q^*_i \cdot w^{c, 2}_i + Q^*_j \cdot w^{c, 2}_j}
    & \\
    & - \frac{b^{\bm{w^{-c}}, \bm{w^c}} + Q^c_i \cdot w^{c, 1}_i + Q^c_j \cdot w^{c, 1}_j}{B^{\bm{w^{-c}}, \bm{w^c}} + Q^*_i \cdot w^{c, 1}_i + Q^*_j \cdot w^{c, 1}_j}
    & \\ 
    & = (b^{\bm{w^{-c}}, \bm{w^c}} + Q^c_i \cdot w^{c, 2} + Q^c_j \cdot w^{c, 2}) & \\
    & \cdot (B^{\bm{w^{-c}}, \bm{w^c}} + Q^*_i \cdot w^{c, 1}_i + Q^*_j \cdot w^{c, 1}_j)
    & \\
    & - (b^{\bm{w^{-c}}, \bm{w^c}} + Q^c_i \cdot w^{c, 1}_i + Q^c_j \cdot w^{c, 1}_j)
    & \\
    & \cdot (B^{\bm{w^{-c}}, \bm{w^c}} + Q^*_i \cdot w^{c, 2}_i + Q^*_j \cdot w^{c, 2}_j)
    & \\ 
    & = b^{\bm{w^{-c}}, \bm{w^c}} \cdot (Q^*_i \cdot w^{c, 1}_i + Q^*_j \cdot w^{c, 1}_j) 
    & \\
    & + B^{\bm{w^{-c}}, \bm{w^c}} \cdot (Q^c_i \cdot w^{c, 2}_i + Q^c_j \cdot w^{c, 2}_j)
    & \\
    & - b^{\bm{w^{-c}}, \bm{w^c}} \cdot (Q^*_i \cdot w^{c, 2}_i + Q^*_j \cdot w^{c, 2}_j)
    & \\
    & - B^{\bm{w^{-c}}, \bm{w^c}} \cdot (Q^c_i \cdot w^{c, 1}_i + Q^c_j \cdot w^{c, 1}_j)
    & \\ 
    & = b^{\bm{w^{-c}}, \bm{w^c}}
    & \\
    & \cdot  (Q^*_i \cdot w^{c, 1}_i + Q^*_j \cdot w^{c, 1}_j - Q^*_i \cdot w^{c, 2}_i - Q^*_j \cdot w^{c, 2}_j)
    & \\
    & + B^{\bm{w^{-c}}, \bm{w^c}}
    & \\
    & \cdot (Q^c_i \cdot w^{c, 2}_i + Q^c_j \cdot w^{c, 2}_j - Q^c_i \cdot w^{c, 1}_i - Q^c_j \cdot w^{c, 1}_j)
    & \\ 
    & = b^{\bm{w^{-c}}, \bm{w^c}}
    & \\
    & \cdot  (Q^{-c}_i \cdot w^{c, 1}_i + Q^{-c}_j \cdot w^{c, 1}_j - Q^{-c}_i \cdot w^{c, 2}_i - Q^{-c}_j \cdot w^{c, 2}_j)
    & \\
    & + b^{\bm{w^{-c}}, \bm{w^c}}
    & \\
    & \cdot  (Q^{c}_i \cdot w^{c, 1}_i + Q^{c}_j \cdot w^{c, 1}_j - Q^{c}_i \cdot w^{c, 2}_i - Q^{c}_j \cdot w^{c, 2}_j)
    & \\
    & + B^{\bm{w^{-c}}, \bm{w^c}}
    & \\
    & \cdot (Q^c_i \cdot w^{c, 2}_i + Q^c_j \cdot w^{c, 2}_j - Q^c_i \cdot w^{c, 1}_i - Q^c_j \cdot w^{c, 1}_j)
    & \\ 
    & = b^{\bm{w^{-c}}, \bm{w^c}}
    & \\
    & \cdot  (Q^{-c}_i \cdot w^{c, 1}_i + Q^{-c}_j \cdot w^{c, 1}_j - Q^{-c}_i \cdot w^{c, 2}_i - Q^{-c}_j \cdot w^{c, 2}_j)
    & \\
    & + (B^{\bm{w^{-c}}, \bm{w^c}} - b^{\bm{w^{-c}}, \bm{w^c}})
    & \\
    & \cdot (Q^c_i \cdot w^{c, 2}_i + Q^c_j \cdot w^{c, 2}_j - Q^c_i \cdot w^{c, 1}_i - Q^c_j \cdot w^{c, 1}_j)
\end{align*}

It can be easily verified that $b^{\bm{w^{-c}}, \bm{w^c}} \le B^{\bm{w^{-c}}, \bm{w^c}}$, that $Q^{-c}_i \cdot w^{c, 1}_i + Q^{-c}_j \cdot w^{c, 1}_j \ge Q^{-c}_i \cdot w^{c, 2}_i - Q^{-c}_j \cdot w^{c, 2}_j$ and that $Q^c_i \cdot w^{c, 2}_i + Q^c_j \cdot w^{c, 2}_j \ge Q^c_i \cdot w^{c, 1}_i - Q^c_j \cdot w^{c, 1}_j$ (note that $w^{c, 1}_i +  w^{c, 1}_j = w^{c, 2}_i +  w^{c, 2}_j$). Hence, the above equation is positive.

(As $b^{\bm{w^{-c}}, \bm{w^{c, 1}}}, b^{\bm{w^{-c}}, \bm{w^{c, 2}}}$ are identical, we let $\bm{w^c}$ be either $\bm{w^{c, 1}}, \bm{w^{c, 2}}$)

\end{proof}

\begin{theorem}
    Algorithm 1 is exponential in the number of candidates and issues, and linear in the number of voters. The algorithm always returns a pure Nash equilibrium.
\end{theorem}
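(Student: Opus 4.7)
The plan is to split the theorem into the two claims (correctness and complexity), and handle them in that order since correctness is where the main conceptual work lies, while complexity is then a fairly mechanical accounting. For correctness we have to show both that every output of the algorithm is a pure Nash equilibrium, and that the algorithm does indeed output something (i.e.\ at least one iteration's system is feasible).

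For the ``every output is an equilibrium'' direction, I would fix an iteration whose system admits a solution $\bm{w}$ and argue as follows. The constraints in item~1 ensure that $\bm{w}^c$ is a feasible investment for each candidate $c$. The equations in item~2 are precisely the algebraic form of the indifference condition $\upar^c(\investin{c}{i^c}{c'}{\cdot}) = \upar^c(\investin{c}{i}{c'}{\cdot})$ between focused investments in $i^c$ and $i$, obtained by clearing the denominator $\bm{Q}^*\cdot\bm{w}$ in the expression given by \Cref{model:final_simplification}; the inequalities in item~3 are the analogous comparisons for $j\notin I^c$. Hence for every $c$, a focused investment in any $i\in I^c$ is a best response among focused investments. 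Applying \Cref{frac:focus_optimal_iff_split_is} in the converse direction then yields that $\bm{w}^c$, which has strict support exactly $I^c$, is itself a best response to $\bm{w}^{-c}$. This holds for every $c$, so $\bm{w}$ is a pure Nash equilibrium. For the other direction, \Cref{frac:pure_equilibrium_exists} guarantees that some pure equilibrium $\bm{w}^*$ exists; letting $(I^c)^* = \{i : (w^c_i)^* > 0\}$, the iteration over these particular supports will have $\bm{w}^*$ as a solution (item~1 by feasibility, items~2--3 by \Cref{frac:focus_optimal_iff_split_is} applied to the equilibrium). So the algorithm terminates successfully no later than this iteration.

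For complexity, the first step is the preprocessing that replaces the voter-level representation by the aggregated ranks $\bm{Q}^c$: by \Cref{model:final_simplification} each $Q^c_i$ is a sum over voters and takes $O(|V|\cdot|I|\cdot|C|)$ time overall, and this is the only part of the computation that touches $V$. The outer loop enumerates $(2^{|I|}-1)^{|C|}$ tuples $(I^c)_{c\in C}$, giving the exponential dependence in $|I|$ and $|C|$. Inside each iteration, the key observation is that, despite the products $Q^c\cdot B^c$ and $Q^*\cdot B^*$ appearing in items~2--3, both $B^c$ and $B^*$ are linear in the decision variables $\{w^{c'}_{i'}\}_{c'\neq c}$, while the factors $Q^c_i, Q^*_i, W^c$ are constants; so items~2--3 are linear (in)equalities in at most $|C|\cdot|I|$ variables. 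Feasibility of the resulting LP, along with reading off a solution, is polynomial in $|I|$ and $|C|$ and does not re-touch $V$. Combining, the total running time is $O(|V|\cdot|I|\cdot|C|) + 2^{O(|I|\cdot|C|)}\cdot\mathrm{poly}(|I|,|C|)$, which is linear in $|V|$ and exponential in $|I|$ and $|C|$.

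The main obstacle I anticipate is the first correctness direction: namely, convincingly identifying the equations in item~2 with the indifference conditions between focused responses, and then leveraging \Cref{frac:focus_optimal_iff_split_is} to promote ``best among focused'' to ``best among all responses.'' The algebraic step is routine cross-multiplication, but one must be careful to check that the denominator $\bm{Q}^*\cdot\bm{w}$ is strictly positive (which holds because at least one $w^c_i>0$ exists in any candidate's support and $Q^*_i\geq Q^c_i\geq 0$), and that item~3 covers \emph{all} alternative focused investments so that no profitable deviation is missed.
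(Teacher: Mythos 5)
Your proposal is correct and follows essentially the same route as the paper: identify the item~2/item~3 constraints with the focused-investment indifference and dominance conditions via cross-multiplication (the $B^c\cdot B^*$ terms cancel, leaving a linear system), invoke \Cref{frac:focus_optimal_iff_split_is} to pass between focused and split best responses, and use \Cref{frac:pure_equilibrium_exists} to guarantee that the iteration matching the true equilibrium's supports is feasible. Your write-up is somewhat more explicit than the paper's about the two directions of correctness and about why each iteration reduces to an LP feasibility check, but the decomposition and the key lemmas are the same.
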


\begin{proof}
    In order to see that the algorithm work, we first examine the constraints. 
    
    It's easy to see that the validness constraints hold iff the investment is a feasible investment.

    Next, for the issue indifferent constraints, we can see that:

    \begin{align*}
        & \frac{B^c + W^c \cdot Q^c_{i^c}}{B^* + W^c \cdot Q^*_{i^c}} = \frac{B^c + W^c \cdot Q^c_i}{B^* + W^c \cdot Q^*_i} \Leftrightarrow
        & \\
        & (B^c + W^c \cdot Q^c_{i^c}) \cdot (B^* + W^c \cdot Q^*_i) 
        & \\
        & = (B^c + W^c \cdot Q^c_i) \cdot (B^* + W^c \cdot Q^*_{i^c}) \Leftrightarrow
        & \\
        & W^c \cdot Q^*_i \cdot B^c + W^c \cdot Q^c_{i^c} \cdot B^* + (W^c)^2 \cdot Q^c_{i^c} \cdot Q^*_i
        & \\
        & = W^c \cdot Q^*_{i^c} \cdot B^c + W^c \cdot Q^c_i \cdot B^* + (W^c)^2 \cdot Q^c_i \cdot Q^*_{i^c}
    \end{align*}

    In the last passage the only non-linear term, $B^c \cdot B^*$, canceled itself from both sides of the equation.
    This condition is necessary and enough by \Cref{frac:focus_optimal_iff_split_is} for an optimal response.

    Finally, for the alternative investments constraints:

    \begin{align*}
    & \frac{B^c + W^c \cdot Q^c_j}{B^* + W^c \cdot Q^*_j} \le \frac{B^c + W^c \cdot Q^c_i}{B^* + W^c \cdot Q^*_i} \Leftrightarrow
    & \\
    & (B^c + W^c \cdot Q^c_j) \cdot (B^* + W^c \cdot Q^*_i) 
    & \\
    & \le (B^c + W^c \cdot Q^c_i) \cdot (B^* + W^c \cdot Q^*_j) \Leftrightarrow
    & \\
    & W^c \cdot Q^*_i \cdot B^c + W^c \cdot Q^c_j \cdot B^* + (W^c)^2 \cdot Q^c_j \cdot Q^*_i
    & \\
    & \le W^c \cdot Q^*_j \cdot B^c + W^c \cdot Q^c_i \cdot B^* + (W^c)^2 \cdot Q^c_i \cdot Q^*_j
    \end{align*}
    Again in the last passage $B^c \cdot B^*$ cancels itself from both sides of the equation.
    
    Recall that by \Cref{frac:focus_optimal_iff_split_is} there exists an optimal focus investment.

    It can be seen that the number of constraints and variables is identical; $B^*, B^c : c \in C$ are directly associated with an equation, and for every candidate $c$ there is one validness equation and $|I^c|-1$ indifference equations, result in $|I^c|$ equations at total which is identical to the number of variables associated with $c$. No additional variables are presented.

    Hence, and since all introduced constraints hold if and only if the solution is an equilibrium, the algorithm will eventually find a (pure) Nash equilibrium that exists by \Cref{frac:pure_equilibrium_exists}.
\end{proof}

\investingInHigherIssue*

\begin{proof}
    Since candidates are investing in different issues ($c$ only in $i, i'$ and $c'$ only in $j, j'$), we can assume a budget of $1$ for every candidate, by 
    normalizing every issue accordingly (i.e. dividing the ranking of each player by the original budget).

    By negation, we assume \crefrange{frac:investing_in_higher_issue:prerequisite1}{frac:investing_in_higher_issue:prerequisite3} hold but not \cref{frac:investing_in_higher_issue:result}, hence:

    \begin{align*}
        & \frac{Q^c_i + Q^c_j}{Q^*_i + Q^*_j} = \upar^c(\investin{c}{i}{c'}{j})
        & \\
        & < \upar^c(\investin{c}{i'}{c'}{j}) = \frac{Q^c_{i'} + Q^c_j}{Q^*_{i'} + Q^*_j} \Leftrightarrow
        & \\
        & (Q^c_i + Q^c_j) \cdot (Q^*_{i'} + Q^*_j) < (Q^c_{i'} + Q^c_j) \cdot (Q^*_i + Q^*_j) \Leftrightarrow
        & \\
        & t_1 = Q^c_{i'} \cdot Q^*_i + Q^c_{i'} \cdot Q^*_j + Q^c_j \cdot Q^*_i 
        & \\
        & - Q^c_i \cdot Q^*_{i'} - Q^c_i \cdot Q^*_j - Q^c_j \cdot Q^*_{i'} > 0
    \end{align*}

    And:

    \begin{align*}
        & 1 - \upar^{c}(\investin{c}{i'}{c'}{j}) = \upar^{c'}(\investin{c}{i'}{c'}{j})
        & \\
        & < \upar^{c'}(\investin{c}{i'}{c'}{j'}) = 1 - \upar^{c}(\investin{c}{i'}{c'}{j'}) \Leftrightarrow
        & \\
        & \frac{Q^c_{i'} + Q^c_{j'}}{Q^*_{i'} + Q^*_{j'}} = \upar^{c}(\investin{c}{i'}{c'}{j'})
        & \\
        & < \upar^{c}(\investin{c}{i'}{c'}{j}) = \frac{Q^c_{i'} + Q^c_j}{Q^*_{i'} + Q^*_j} \Leftrightarrow
        & \\
        & (Q^c_{i'} + Q^c_{j'}) \cdot (Q^*_{i'} + Q^*_j) < (Q^c_{i'} + Q^c_j) \cdot (Q^*_{i'} + Q^*_{j'}) \Leftrightarrow
        & \\
        & t_2 = Q^c_{i'} \cdot Q^*_{j'} + Q^c_j \cdot Q^*_{i'} + Q^c_j \cdot Q^*_{j'}
        & \\
        & - Q^c_{i'} \cdot Q^*_j - Q^c_{j'} \cdot Q^*_{i'} - Q^c_{j'} \cdot Q^*_j > 0
    \end{align*}

    And by negation:

    \begin{align*}
        & \frac{Q^c_{i'} + Q^c_{j'}}{Q^*_{i'} + Q^*_{j'}} = \upar^c(\investin{c}{i'}{c'}{j'})
        & \\
        & \le \upar^c(\investin{c}{i}{c'}{j'}) = \frac{Q^c_i + Q^c_{j'}}{Q^*_i + Q^*_{j'}} \Leftrightarrow
        & \\
        & (Q^c_{i'} + Q^c_{j'}) \cdot (Q^*_i + Q^*_{j'}) \le (Q^c_i + Q^c_{j'}) \cdot (Q^*_{i'} + Q^*_{j'}) \Leftrightarrow
        & \\
        & t_3 = Q^c_i \cdot Q^*_{i'} + Q^c_i \cdot Q^*_{j'} + Q^c_{j'} \cdot Q^*_{i'} 
        & \\
        & - Q^c_{i'} \cdot Q^*_i - Q^c_{i'} \cdot Q^*_{j'} - Q^c_{j'} \cdot Q^*_i \ge 0
    \end{align*}

    By \cref{frac:investing_in_higher_issue:prerequisite1} it holds that $Q^c_{i'} > Q^c_i$. Hence, all the terms in the next equation are positive:

    \begin{align*}
        & (Q^c_{i'} + Q^c_{j'}) \cdot t_1 + (Q^c_{i'} - Q^c_i) \cdot t_2 + (Q^c_{i'} + Q^c_j) \cdot t_3
    \end{align*}

    With $(Q^c_{i'} - Q^c_i), t_2$ being strictly positive, the above equation is strictly positive. However, expanding it reveals it's algebraically zero.

    Hence, the negation assumption cannot hold, and it must be that $\upar^c(\investin{c}{i}{c'}{j'}) <
     \upar^c(\investin{c}{i'}{c'}{j'})$.
\end{proof}

\twoCandidatesFocusPureEquilibrium*

\begin{proof}
    We denote $i_k, j_k$ the $k$'th issues $c, c'$ invest in, respectively. $i_1, j_1$ are the issues on which $c, c'$ are ranked the lowest, correspondingly. $i_{k+1}$ is the least higher-ranked response of $c$ to $j_k$ that yields strictly higher utility than $i_k$, and similarly $j_{k+1}$ is the least higher-ranked response of $c'$ to $i_k$ that yields higher utility than $j_k$. This process terminates when no better response exists for both candidates.
    
    We will prove the next claim by induction:

    $\upar^c(\investin{c}{i'}{c'}{j_k}) < \upar^c(\investin{c}{i_k}{c'}{j_k})$ for all $i' \in I$ such that $i_k$ is higher-ranked than $i'$.

    The claim is trivial for $k=0$. Assuming the claim is true for $k$, and that the process is not terminated yet:

    By assumption, $\upar^c(\investin{c}{i'}{c'}{j_k}) < \upar^c(\investin{c}{i_k}{c'}{j_k})$ for all $i' \in I$ such that $Q^*_{i'} < Q^*_{i_k}$.
    By the process' definition $\upar^c(\investin{c}{i'}{c'}{j_k}) < \upar^c(\investin{c}{i_{k+1}}{c'}{j_k})$ for all $i' \in I$ such that $Q^*_{i_k} < Q^*_{i'} < Q^*_{i_{k + 1}}$, and in addition $\upar^c(\investin{c}{i_k}{c'}{j_k}) < \upar^c(\investin{c}{i_{k+1}}{c'}{j_k})$.
    Together, we have $\upar^c(\investin{c}{i'}{c'}{j_k}) < \upar^c(\investin{c}{i_{k+1}}{c'}{j_k})$ for all $i' \in I$ such that $Q^*_{i'} < Q^*_{i_{k+1}}$.

    By definition, $\upar^c(\investin{c}{i_{k+1}}{c'}{j_k})$ > $\upar^c(\investin{c}{i_{k+1}}{c'}{j_{k+1}})$ (since we have a fixed-sum game) and $Q^*_{j_k} < Q^*_{j_{k+1}}$. Now, for every $i': Q^*_{i'} < Q^*_{i_{k+1}}$, we can use \cref{frac:investing_in_higher_issue} and we get that $\upar^c(\investin{c}{i'}{c'}{j_{k+1}}) < \upar^c(\investin{c}{i_{k+1}}{c'}{j_{k+1}})$ for all $i' \in I$ such that $Q^*_{i'} < Q*_{i_{k + 1}}$, and thus complete the induction step.

    The analog proof for the other candidate can be obtained in a similar manner. Together, we can see that when searching for best responds for some candidate, we indeed need to consider only higher-ranked issues than the current investment, necessarily terminating in a (pure, focused) Nash equilibrium.

    For the complexity, by \cref{model:first_simplification} and \cref{model:final_simplification}, computing the simplified version of the problem takes $O(|V|)$ steps.

    The first line of the algorithm takes $O(|I|)$ steps. For the second line, since in every iteration of the loop we advance either $i$ or $j$ and we do not repeat issues, the number of iterations is at most $2 \cdot |I|$. 
    
    Computing the utility of candidates given focus investments is constant (as there are only two candidates), and is done for every issue, hence the first two steps in the loop takes $O(|I|)$ time each. Performing argmax over issues also take $O(|I|)$ time, hence every iteration takes $O(I)$ time, results in total time of $O(|I|^2)$ for the loop.
\end{proof}

\subsection{Proofs from \Cref{sec:presidential}}

\twoCandidatesDominantStrategies*

\begin{proof}
    The difference between the number of votes $c, c'$ get is $f_{\bm{w^{c'}}}(\bm{w^c}) = p(c, \bm{w}) - p(c', \bm{w}) = \bm{Q^c} \cdot \bm{w^c} + \bm{Q^c} \cdot \bm{w^{c'}} - \bm{Q^{c'}} \cdot \bm{w^c} - \bm{Q^c} \cdot \bm{w^{c'}} = b + (\bm{Q^c} - \bm{Q^{c'}}) \cdot \bm{w^c}$

    For $b = (\bm{Q^c} - \bm{Q^{c'}}) \cdot \bm{w^{c'}}$. It can be seen that in order to maximize $f$ with a valid $\bm{w^c}$ it is best to have $w^c_i = W^c$ for $i = \underset{i \in I}{\text{argmax }} Q^c_i - Q^{c'}_i$.

    Now, as the utility of $c$ is monotonic in $f$ (zero for negative value, half for zero, one for one), it is indeed a best response to invest in $\underset{i \in I}{\text{argmax }} Q^c_i - Q^{c'}_i$. As this expression is independent of the investment of $c'$, it must be a dominant strategy.
\end{proof}

\plusTwoCandidatesFracEquilibrium*

\begin{proof}
    Say a Nash equilibrium under $\upar$ is not a Nash equilibrium under $\upresplus$. Then one of the candidates can improve it's utility by investing differently. However, by the definition of $\upresplus$, it means this candidate either improve it's relative fraction of votes $r$, or it's victory indicator $v$. However, improving $r$ implies it's possible to improve the utility under $\upar$, which is a contradiction. Improving $v$ means that the player was losing and is now on tie or better (hence improving it's fraction of votes from less than $0.5$ to $0.5$ or higher), or that the player was on tie and is now winning alone, (hence improving it's share from $0.5$ to higher that this). Either way, we get that one candidate can improve its fraction of votes and hence its utility under $\upar$, which is a contradiction.
\end{proof}

\plusNoBestResponse*

\begin{proof}

Consider the following game.

\begin{table}[h]
\centering
\begin{tabular}{@{\extracolsep{4pt}}rrrr}
\toprule
& $c_1$ & $c_2$ & $c_3$ \\ 
\midrule
$Q^c_1$ & $10$ & $0$ & $11$ \\
$Q^c_2$ & $10$ & $9$ & $9$ \\
$W^c$ & $1$ & $0$ & $0$ \\
\bottomrule
\end{tabular}\\[3mm]
\end{table}

Denoting $\bm{w}(x) = (x, 1-x) : 0 \le x \le 1$ an investment of $x$ in issue $1$ and of $1-x$ in issue $2$, we get that:

\begin{align*}
    & p(c_1, \bm{w}(x)) = 10
    & \\
    & p(c_2, \bm{w}(x)) = 9 \cdot (1-x)
    & \\
    & p(c_3, \bm{w}(x)) = 11 \cdot x + 9 \cdot (1-x)
\end{align*}

It can be seen that $c_1$ wins alone if and only if $x < 0.5$, hence in every best response it must hold. However:

\begin{align*}
    & r(c_1, \bm{w}(x)) = \frac{10}{11 \cdot x + 18 \cdot (1-x)}
\end{align*}

Which is monotonic in $x$ (within the boundary). Hence, the best investment is $\underset{x < 0.5}{\text{max }} x$, which clearly does not exist.

\end{proof}

\maxPureBestResponseExists*

\begin{proof}
    If, in the best response under $\upresind$ (exists by \Cref{ind:best_response_exists}), the responding candidate gets a tie or wins, it is also a best response under $\upresmax$, as the responding candidate is indifferent to his fraction of votes if he doesn't lose.

    Otherwise, the responding candidate cannot win or get a tie, and can only maximize his fraction of votes. In that case the best response under $\upar$ (exists by \Cref{frac:best_response_exists}) is also a best response under $\upresmax$.
\end{proof}

\maxBestResponseDoesntExist*

Proved together with:

\maxNashEquilibriumDoesntExist*

\begin{proof}

Consider the following game:

\begin{table}[h]
\centering
\begin{tabular}{@{\extracolsep{4pt}}rrrrr}
\toprule
& $c_1$ & $c_2$ & $c_3$ & $c_4$ \\ 
\midrule
$Q^c_1$ & $1$ & $1$ & $1 - \epsilon$ & $1$ \\
$Q^c_2$ & $1$ & $1 - \epsilon$ & $1 + \epsilon$ & $0$ \\
$W^c$ & $1$ & $1$ & $0$ & $0$ \\
\bottomrule
\end{tabular}\\[3mm]
\end{table}

We will show that there exists an $\epsilon > 0$ such that the game has no equilibrium.

We first claim that for small enough $\epsilon$, when considering only the fraction of votes, $r$, investing in issue $2$ strictly dominates investing in $1$ for both $c_1, c_2$. This can be seen for $\epsilon = 0$ by \Cref{frac:superior_issues_are_better}. Now, since the fraction of votes is a continuous function, it must be true for positive small enough $\epsilon$ too.

If $c_1$ invests deterministically all its budget in $1$, then the only best response for $c_2$ it to invest all in $1$ and get a tie. However, in that case investing all in $1$ is not a best response for $c_1$ (e.g. it is better to invest $0.9$ in $1$ and $0.1$ in $2$ and win alone), hence it cannot be an equilibrium.  

If $c_1$ invests deterministically all its budget in $2$, then the only best response for $c_2$ it to invest all in $2$. However, in that case investing all in $1$ is a better response for $c_1$, hence it cannot be an equilibrium.  

If $c_2$ invests deterministically all its budget in $1$, then the only best response for $c_1$ is to invest part of its budget in $1$ and part in $2$, in order to be the only winner. However, in that case $c_2$ loses and it's utility is determine only by its fraction of votes, a case in which he would prefer to invest in $2$ instead. Hence, this case too cannot be an equilibrium. 

Similarly, it can be seen that in an equilibrium $c_2$ is not investing part of it's budget in $1$ and part in $2$, because in this case he is losing and would prefer to invest all in $2$.

Denoting $x > 0, 1-x > 1$ the probabilities of $c_2$ to invest all in $1, 2$ correspondingly, the utility of $c_1$ is:

$x \cdot \frac{V}{2} + (1 - x) \cdot V$ when investing all in $1$.

$x \cdot V + (1 - x) \cdot r(c_1, (a, 2 - a))$ when investing $0 < a < 1$ in $1$ and a positive amount in $2$.

$x \cdot \frac{V}{2} + (1 - x) \cdot r(c_1, (0, 2))$ when investing all in $2$, which is strictly dominated by the first option.

Since the only options that are not strictly dominated by others are the first two, and since as was claimed before in an equilibrium it cannot be that $c_1$ invests deterministically in $1$, it must be that $c_1$ has a positive chance to invest $0 < a < 1$ in $1$ and a the rest in $2$, with expected utility of $x \cdot V + (1 - x) \cdot r(c_1, (a, 2 - a))$.

However, from \Cref{frac:superior_issues_are_better}, it can be seen that $r(c, (a, 2 - a))$ is strongly monotonic in $a$, and no maximum limited to $0 < a < 1$ exists. Hence, no equilibrium exists.

It can be seen that for some $x$, a best response does not exist too; i.e., for every $x, a$ close enough to $1$ it must be that $x \cdot V + (1 - x) \cdot r(c_1, (a, 2 - a))$ is greater than $x \cdot \frac{V}{2} + (1 - x) \cdot V$, however there is still no maximum since $a < 1$, hence no best response exists.

\end{proof}

\end{document}